\documentclass{ingosclass}
\bibliography{bib}
\let\oldcdot\cdot %if we still need it 
\renewcommand{\cdot}{\mathbin{\mkern-2mu\oldcdot\mkern-2mu}}

\newcommand{\wrt}{w.\,r.\,t.}
\newcommand{\eg}{e.\,g.}
\newcommand{\soth}{s.\,t.} 
\newcommand{\ie}{i.\,e.}

\newcommand{\resp}{resp.}

\newcommand{\formComma}{\,\text{,}}
\newcommand{\formPeriod}{\,\text{.}}

\newcommand{\R}{\mathbb{R}}

\newcommand{\surf}{\mathcal{S}}

\newcommand{\tangent}[2][]{\tensor{\operatorname{T}\!}{#1}#2}
\newcommand{\tangentS}[1][]{\tangent[#1]{\surf}}
\newcommand{\tangentR}[1][]{\tangent[#1]{\R^3\vert_{\surf}}}

\newcommand{\tangentAS}[1][^2]{\tensor{\operatorname{\mathcal{A}}\!}{#1}\surf}

\newcommand{\tangentScal}{\tangentS[^0]}

\newcommand{\dbdot}{\operatorname{:}}

\newcommand{\timeLll}{\mathfrak{L}^{\flat\flat}}

\newcommand{\Dt}[1][]{\operatorname{D}^{#1}_t \!}

\newcommand{\Dlow}{\Dt[\flat]}

\newcommand{\Comp}{\mathsf{C}}
\newcommand{\nablaC}{\nabla_{\!\Comp}}

\newcommand{\Tr}{\operatorname{Tr}}
\renewcommand{\div}{\operatorname{div}}
\newcommand{\Div}{\operatorname{Div}}

\newcommand{\DivC}{\Div_{\!\Comp}}

\newcommand{\Grad}{\operatorname{Grad}}
\newcommand{\GradC}{\Grad_{\Comp}}

\newcommand{\hil}{\operatorname{L}^{\!2}}
\newcommand{\hilspace}[1]{\hil(#1)}
\newcommand{\inner}[3][0]{\left\langle #3 \right\rangle_{\hspace{-#1pt}#2}}
\newcommand{\normsq}[3][0]{\left\| #3 \right\|_{\hspace{-#1pt}#2}^2}
\newcommand{\innerH}[3][0]{\inner[#1]{\hilspace{#2}}{#3}}
\newcommand{\normHsq}[3][0]{\normsq[#1]{\hilspace{#2}}{#3}}

\newcommand{\paraC}{X}
\newcommand{\para}{\boldsymbol{\paraC}}
\newcommand{\normalC}{\nu}
\newcommand{\normal}{\boldsymbol{\normalC}}
\newcommand{\shopC}{I\!I}
\newcommand{\shop}{\boldsymbol{\shopC}}

\newcommand{\meanc}{\mathcal{H}}

\newcommand{\Id}{\boldsymbol{Id}}
\newcommand{\IdS}{\Id_{\surf}}
\newcommand{\IdTwo}{\Ib_{2}}

\newcommand{\mfrak}{\mathfrak{m}}
\newcommand{\ofrak}{\mathfrak{o}}

\newcommand{\eb}{\boldsymbol{e}}

\newcommand{\gb}{\boldsymbol{g}}

\newcommand{\nb}{\boldsymbol{n}}

\newcommand{\qb}{\boldsymbol{q}}
\newcommand{\rb}{\boldsymbol{r}}

\newcommand{\vb}{\boldsymbol{v}}
\newcommand{\wb}{\boldsymbol{w}}

\newcommand{\Eb}{\boldsymbol{E}}

\newcommand{\Gb}{\boldsymbol{G}}

\newcommand{\Ib}{\boldsymbol{I}}

\newcommand{\Rb}{\boldsymbol{R}}

\newcommand{\Vb}{\boldsymbol{V}}
\newcommand{\Wb}{\boldsymbol{W}}

\newcommand{\etab}{\boldsymbol{\eta}}
\newcommand{\thetab}{\boldsymbol{\theta}}      % oder \varthetab

\newcommand{\lambdab}{\boldsymbol{\lambda}}
\newcommand{\mub}{\boldsymbol{\mu}}

\newcommand{\Gammab}{\boldsymbol{\Gamma}}

\newcommand{\hfrak}{\mathfrak{h}}

\newcommand{\vnor}{v_{\bot}}
\newcommand{\wnor}{w_{\bot}}

\newcommand{\nullb}{\boldsymbol{0}}

\newcommand{\Vbobs}{\Vb_{\!\!\ofrak}}
\newcommand{\vbobs}{\vb_{\!\ofrak}}

\newcommand{\vF}{\vb^{\flat}}

\newcommand{\wF}{\wb^{\flat}}
\newcommand{\wS}{\wb^{\sharp}}

\newcommand{\potenergy}{\mathfrak{U}}

\renewcommand{\dh}{\partial h}
\newcommand{\ddh}{\partial^2 h}

\newcommand{\HT}{\textup{S}}
\newcommand{\energyHT}{\potenergy_{\HT}}

\newcommand{\ScalState}{\textup{S}}
\newcommand{\energyS}{\potenergy_{\ScalState}}

\newcommand{\deltafrac}[2]{\frac{\delta #1}{\delta #2}}
\newcommand{\ddfrac}[2][]{\frac{\textup{d} #1}{\textup{d} #2}}
\newcommand{\ddt}[1][]{\ddfrac[#1]{t}}

\newcommand{\dS}{\textup{d}\surf}

\newcommand{\DX}{D_{\!\para}}
\newcommand{\Dpsi}{D_{\!\psi}}

\newcommand{\Mspat}{M_{\para}}
\newcommand{\Mdens}{M_{\psi}}

\newcommand{\HmOne}{H^{\mathord{\text{-}}1}}

\title{Scalar Truesdell Time Derivative and \mbox{($L^2,\HmOne$) - Surface Gradient Flows}}
\author[1]{Rainer Backofen}
\author[1]{Ingo Nitschke}
\author[1,2,3]{Axel Voigt}
\affil[1]{Institut f{\"u}r Wissenschaftliches Rechnen, Technische Universit{\"a}t Dresden, 01062 Dresden, Germany}
\affil[2]{Center for Systems Biology Dresden (CSBD), Pfotenhauerstr. 108, 01307 Dresden, Germany}
\affil[3]{Cluster of Excellence Physics of Life (PoL), Technische Universität Dresden, 01062~Dresden, Germany}
\date{}

\begin{document}
\maketitle
%\tableofcontents

\begin{abstract}
We address surface gradient flows which allow for energy dissipation by evolving the surface and a scalar quantity on it, simultaneously. A proper choice of the time derivative and the gauge of surface independence guarantees energy dissipation and ensures conservation of the scalar quantity. The resulting system of partial differential equations couples geometric evolution equations for the evolution of the surface in normal directions, equations for tangential movement and scalar-valued equations on the evolving surface. We discuss the general setting and the special case of surface tension flows and numerically demonstrate the importance of tangential movement on the evolution.
\end{abstract}

\section{Introduction}

We consider gradient flows of surface energies $ \potenergy = \potenergy[\para,\psi] $ that depend on a surface $\surf$ through a parameterization $\para$ and a scalar field $\psi$ defined on $\surf$ and allow for dissipation by evolving $\para$ and $\psi$, simultaneously. Such models arise in various applications, ranging from material science to biology and medicine. Examples include adatoms on material surfaces \cite{FRIED20041, Burger_CMS_2006, Raetz_Nonl_2007,Caroccia_ARMA_2018}, particle densities in thin crystalline structures \cite{Aland_PF_2011,Aland_PRE_2012,Aland_MMS_2012,elder2021modeling,BenoitMarechalNitschkeEtAl_MoM_2024}, signaling networks and phase separation in biomembranes \cite{Wang_JMB_2008,Lowengrub_PRE_2009,Elliott_JCP_2010,Jilkine_PLOSCB_2011,Marth_JMB_2014} or tumor growth \cite{Chaplain_2001,Crampin_2002,landsberg2010multigrid,Barreira_2011,Eyles_2019}. In contrast to its broad applicability, mathematical foundations for such models are still limited. One issue arises from a mutual dependency of $\para$ and $\psi$. In \cite{NitschkeSadikVoigt_IJoAM_2023} a more general situation has been considered with the scalar field replaced by a tangential n-tensor field. In this situation the dependency of the tangential n-tensor field on the parameterization $\para$ is obvious and it has been shown that in order to guarantee energy dissipation a notion of surface independence has to be chosen consistently with the time derivative. This allows for various choices, which increase with $n$. While the equilibrium states are independent of these choices, the dynamics differs. As soon as the dynamics matters, a proper choice is required, which is typically determined by the specific application \cite{NitschkeSadikVoigt_IJoAM_2023,stone2023note,pollard2025gauge}. For scalar fields ($n=0$) the situation is much simpler. For ($L^2,L^2$) - surface gradient flows the natural choice is the material time derivative $\dot{\psi}$, see, \eg,  \cite{CERMELLI_FRIED_GURTIN_2005,Dziuk_Elliott_2013,NitschkeVoigt_JoGaP_2023} for proper definitions on evolving surfaces, and the material gauge of surface independence $\eth_{\Wb}\psi=0$, introduced in \cite[Def. 6]{NitschkeSadikVoigt_IJoAM_2023}. The situation changes for ($L^2, \HmOne$) - surface gradient flows, e.g., a $L^2$ - gradient flow with respect to the evolution of $\para$ and a $\HmOne$ - surface gradient flow with respect to the evolution of $\psi$. Besides energy dissipation by evolving $\para$ and $\psi$ simultaneously, such flows also should ensure conservation of $\psi$, e.g., $\frac{d}{dt} \int_\surf \psi d\surf = 0$. At least if local expansion or contraction of the surface are allowed the natural choice of the material time derivative and the material gauge of surface independence reaches its limits and a mutual dependency of $\para$ and $\psi$ become apparent also for scalar fields. We will demonstrate that considering both does not ensure conservation of $\psi$ and modifying the evolution to ensure the conservation property but considering the material gauge of surface independence does not guarantee energy dissipation. We introduce the scalar Truesdell time derivative $\mathring{\psi}$ and the Truesdell gauge of surface independence $\eth_{\Wb}\psi = -\psi\DivC\Wb$, see \cref{sec:STTD,sec:SGF} for definitions, to deal with this situation. At first glance it might look like a formal redefinition only allowing for a compact formulation, but it turns out to become a useful tool to construct ($L^2, \HmOne$) - surface gradient flows which guarantee energy dissipation and conservation property. We discuss the general setting and address surface tension flows as a particular example. The resulting system of equations couples geometric evolution equations for the evolution of $\surf$ in normal directions, equations for tangential movement and scalar-valued surface partial differential equations on the evolving surface. Especially the tangential flow, induced by gradients in $\psi$ requires attention, as it modifies the dynamics of the considered surface gradient flows and is not considered in most of the mentioned applications above. Analytical results for such systems are rare and mostly exist only for special situations, \eg, surface partial differential equations on (given) evolving surfaces, \eg, \cite{alphonse2015abstract}, geometric evolution equations not depending on $\psi$, \eg,  \cite{huisken1984flow}, and coupled systems of geometric evolution equations and surface partial differential equations without the tangential movement, \eg, \cite{,Caroccia_ARMA_2018,ABELS202314,ABELS2023236}. Also results in numerical analysis are restricted to special situations, \eg, the evolution of curves coupled with diffusion on it \cite{barrett2017numerical,pozzi2017curve}, a model for forced mean curvature flow coupled with diffusion on the surface \cite{kovacs2020convergent} and the coupled system resulting as a ($L^2,\HmOne$) - surface gradient flow, but without tangential movement \cite{elliott2022numerical}. We here refrain from extending these results and solely concentrate on modeling issues and resulting properties. We demonstrate the shown properties numerically using a height observer formulation. 

The paper is structured as follows: In \cref{sec:STTD} we introduce the scalar Truesdell time derivative, in \cref{sec:SGF} we consider ($L^2,\HmOne$) - surface gradient flows, including the general setting, surface tension flows as well as their height observer formulation. In \cref{sec:C} we discuss the numerical results and draw conclusions. 
Within the Appendix we consider a general approach to the height observer formulation without a small displacement assumption and address the corresponding models if tangential movement is neglected. Most of the notation is introduced on the fly as it becomes needed. A comprehensive overview of the used notation can be found in \cite{Nitschke_2025}, which consolidates the conventions from \cite{NitschkeVoigt_JoGaP_2023,NitschkeSadikVoigt_IJoAM_2023,NitschkeVoigt_AiDE_2025,NitschkeVoigt_PotRSAMPaES_2025}.

\section{Scalar Truesdell Time Derivative} \label{sec:STTD}

The space of time-dependent scalar fields on two-dimensional evolving or moving surfaces $ \surf\subset\R^3 $ is given by 
$ \tangentScal:=\{\psi:\mathcal{T}\times\surf\rightarrow \R\} $, 
where $ \mathcal{T}\subseteq\R $ is a time interval and we do not specify how the surface is accessed. 
Both local and global coordinates may be used for evaluation. Coordinate invariance provides considerable freedom in this regard.
We further consider $ \tangentR[^n]:=\{\Wb:\mathcal{T}\times\surf\rightarrow (\R^3)^n \} $ as the space of $ n $-tensor fields on $ \surf $ and $ \tangentS[^n] < \tangentR[^n] $ as the space of tangential $ n $-tensor fields on $ \surf $,
where ``$ < $'' denotes the subtensor field relation. 
For $\tangentS[^1]$ and $\tangentR[^1]$ we also write $\tangentS$ and $\tangentR$, respectively.

Conservation of a scalar field $\psi \in \tangentScal$ can be achieved by compatibility \wrt\ the transport theorem, see, \eg, \cite{Stone1990111,FRIED20041,Dziuk_Elliott_2013}. 
This states
\begin{align}\label{eq:transport_theorem}
        \ddt \int_\surf \psi \dS 
            &= \int_\surf \dot{\psi} + \psi\DivC\Vb \dS
            = \int_\surf \dot{\psi} + \psi\left( \div\vb - \vnor\meanc \right) \dS\formComma
\end{align}
with $\dot{\psi} = \partial_t \psi + (\Vb - \Vbobs)\nablaC\psi$ the material time derivative, see, \eg,  \cite{CERMELLI_FRIED_GURTIN_2005,Dziuk_Elliott_2013,NitschkeVoigt_JoGaP_2023} for proper definitions on evolving surfaces and \cite[Appendix B.3.]{BachiniKrauseEtAl_JoFM_2023} for their relation. In the above equation, $ \Vb=\vb+\vnor\normal\in\tangentR $ is the material velocity, with  $ \vb\in\tangentS $ the tangential material velocity, $ \vnor\in\tangentScal $ the normal velocity, and $ \normal\in\tangentR $ the (surface) normal field. Note that the normal velocity is equal for all observer of the same moving surface, thus it does not need to bear the name component ``material''. $ \Vbobs=\vbobs+\vnor\normal\in\tangentR $ is the observer velocity, which is again decomposed into $ \vbobs\in\tangentS $ the tangential observer velocity, and $ \vnor\in\tangentScal $ the normal velocity.
Note that the normal velocity is equal for all observer of the same moving surface, thus it does not need to bear the name component ``observer''. The observer velocity is arbitrary and can serve as mesh velocity or derived \wrt\ a height formulation. As a consequence $  \Vb - \Vbobs = \vb - \vbobs \in\tangentS $. The considered differential operators are  $ \div:\tangentS[^n]\rightarrow\tangentS[^{n-1}] $, which is the covariant divergence and could be defined by $ \div\rb = (\nabla\rb)\dbdot\IdS $, with $ \nabla:\tangentS[^n]\rightarrow\tangentS[^{n+1}] $ the covariant derivative, $ \IdS=\Id-\normal\otimes\normal\in\tangentS[^2] $ the surface identity and $ \rb\in\tangentS[^n] $. The other divergence $ \DivC: \tangentR[^n] \rightarrow \tangentR[^{n-1}] $ is the componentwise trace-divergence and could be defined by $ \DivC\Rb = \nablaC\Rb\dbdot\Id = \nablaC\Rb\dbdot\IdS $ for all $ \Rb\in\tangentR[^n] $ with $ \nablaC: \tangentR[^n] \rightarrow \tangentR[^n]\otimes\tangentS $ the componentwise (surface) derivative defined \wrt\ a Cartesian base $ \{\eb_{A}\} $ \soth\ $ \left[ \nablaC\Rb \right]^{A_1 \ldots A_n B} \eb_B = \nabla R^{A_1 \ldots A_n} $ for all $ \Rb\in\tangentR[^n] $, or $ \nablaC\Rb = \left( \nabla_{\R^3}\Rb \right)\vert_{\surf}\IdS $ for an arbitrary sufficiently smooth extension of $ \Rb $ in a vicinity of $ \surf $.
It holds $ \DivC\Wb = \div\wb - \wnor\meanc $ for all $ \Wb=\wb + \wnor\normal\in\tangentR $. Note that this divergence is not the $ \hil $-adjoint of $ \nablaC $. We therefore also use $ \GradC: \tangentR[^n] \rightarrow \tangentR[^{n+1}] $ as the componentwise adjoint gradient defined by the $ \hil $-adjoint of the trace-divergence, \ie\ $ \GradC := -\DivC^{*} $. It holds $ \GradC f = \DivC(f\IdS) = \nabla f + f\meanc \normal $ for all scalar fields $ f\in\tangentScal $. Finally $ \meanc = -\DivC\normal \in\tangentScal $ is the mean curvature. With these things defined one could be tempted to define a new time derivative, which we call the \emph{scalar Truesdell time derivative}, which acting on $ \psi\in\tangentScal $ reads
\begin{empheq}[box=\fbox]{align}\label{eq:truesdell_rate}
\begin{aligned}
    \mathring{\psi}
        &= \dot{\psi} + \psi\DivC\Vb
        &&= \partial_t\psi + (\Vb - \Vbobs)\nablaC\psi + \psi\DivC\Vb &&= \partial_t\psi + \DivC(\psi\Vb) - \Vbobs\nablaC\psi\\
        &= \dot{\psi} + \psi\left( \div\vb - \vnor\meanc \right)
        &&= \partial_t\psi + \nabla_{\vb-\vbobs}\psi + \psi\left( \div\vb - \vnor\meanc \right) &&= \partial_t\psi + \div(\psi\vb) - \nabla_{\vbobs}\psi - \psi\vnor\meanc\formPeriod
\end{aligned}
\end{empheq}
It has already been introduced in \cite{Federico_ZfaMuP_2022} for 3-dimensional Euclidean spaces $ \cong\R^3 $ as Truesdell rate. 
The considered covariant directional derivative reads $ \nabla_{\wb}\psi = \wb\nabla\psi = (\nabla\psi)\wb\in\tangentScal $ for all directions $ \wb\in\tangentS $. 
With the scalar Truesdell time derivative the conservation property simply reads
\begin{align*}
        \ddt \int_\surf \psi \dS 
            &= \int_\surf \mathring{\psi} \dS = 0\formPeriod
\end{align*}

While this is just a formal redefinition allowing for a compact formulation, there are other approaches to the Truesdell rate, which point to its deeper mathematical origin. When $\psi$ is interpreted not merely as a scalar field but as a density proxy, it is natural to introduce the associated differential 2-form $\thetab = \psi \mub \in \Lambda^2\surf$,
where $ \mub\in \Lambda^2\surf $ is the differential area form on $ \surf $ \wrt\ chosen local coordinates.
As a consequence, it holds $ \int_{\surf}\psi\dS = \int_{U}\thetab $, where $ U\subset\R^2 $ is the domain of the local coordinates.
Although $\psi$ and $\thetab$ are trivially isomorphic, their temporal change on moving surfaces differs fundamentally. 
In particular, scalar field rates are typically defined via the material (\resp\ total or substantial) time derivative, 
whereas differential forms are evolved using the lower-convected time derivative, given by the Lie derivative of time-dependent covariant tensor field proxies \cite{marsden1994mathematical}.
Since differential forms and skew-symmetric tensors are equivalent as multilinear maps into $ \R $, the lower-convected time derivative from \cite{NitschkeVoigt_JoGaP_2022,NitschkeVoigt_JoGaP_2023} applies directly in a coordinate-free formulation.
We can identify $ \mub $ by the Levi-Civita tensor $ \Eb\in\tangentAS $, where $ \tangentAS < \tangentS[^2] $ is the space of tangential skew-symmetric 2-tensor fields,
\ie\ $ \thetab = \psi\Eb $ as a local bilinear map.
Levi-Civita compatibility $ \dot{\Eb}=\nullb $ of the tangential material derivative yields the lower-convected rate
\begin{align*}
    \Dlow\thetab
        &= \timeLll\thetab
         = \dot{\psi}\Eb + \psi\dot{\Eb} + \psi\left( \Gb^{T}[\Vb]\Eb + \Eb\Gb[\Vb] \right)
         = \dot{\psi}\Eb +\psi\left( \Eb\Gb[\Vb] - (\Eb\Gb[\Vb])^T \right)\\
        &= \dot{\psi}\Eb + \psi(\Tr\Gb[\Vb])\Eb 
         = \left( \dot{\psi} + \psi\DivC\Vb \right)\Eb
         = \mathring{\psi} \Eb \formComma
\end{align*}
with $ \Gb[\Vb] = \IdS\nablaC\Vb = \nabla\vb - \vnor\shop\in\tangentS[^2] $ the tangential material velocity gradient and $ \shop = -\nablaC\normal \in\tangentS[^2] $ the shape-operator, \resp\ second fundamental form or (extended) Weingarten map.
In conclusion, this demonstrates that the Truesdell derivative defines a natural observer-invariant rate for densities when represented as scalar fields.
An alternative viewpoint on this matter is offered by \cite{Federico_ZfaMuP_2022}, where the Truesdell rate of a scalar field can be interpreted as the forward Piola transform of the time derivative of its backward Piola transform.
The use of the Piola transformation is particularly natural, as it encodes the deformation-induced area change and thereby ensures that the resulting rate is compatible with the transport theorem \eqref{eq:transport_theorem} by construction.
A related idea, following \cite{NitschkeVoigt_JoGaP_2023}, is to introduce a suitable pullback of a future state at time $t+\tau$ onto the surface at time $t$, and then define the time derivative via the resulting difference quotient. 
Roughly speaking, this can be constructed as follows:
Considering the map $ \Phi:\surf\vert_{t} \rightarrow \surf\vert_{t+\tau} $ with $ \Phi\vert_{\tau=0} = \operatorname{id} $, we define the Truesdell pullback of $ \psi\vert_{t+\tau} $ at time $ t $ by
$ \Phi^{*}(\psi\vert_{t+\tau}) 
    :=  \sqrt{|\gb_{\mfrak}|}\vert_{t+\tau} / \sqrt{|\gb_{\mfrak}|}\vert_{t} \; \psi\vert_{t+\tau} $. 
This respects the area deformation by an isotropic stretching factor comprising the determinant of the material metric tensor $ |\gb_{\mfrak}| $, \ie\ Lagrange perspective, at both time steps. 
The Truesdell derivative follows directly as 
$ \mathring{\psi}\vert_{t} 
    = \ddfrac{\tau}\big\vert_{\tau=0}\Phi^{*}(\psi\vert_{t+\tau}) 
    = \lim_{\tau\rightarrow 0}\frac{\Phi^{*}(\psi\vert_{t+\tau}) - \psi\vert_{t}}{\tau}  $ for the material observer and per coordinate transformation for arbitrary observers.

Whichever route we take to define the scalar Truesdell time derivative \eqref{eq:truesdell_rate}, we obtain the following proposition:
\begin{proposition}\label{prop:truesdell_conservation_prop}
    For all $ \qb\in\tangentS $ with $ \int_{\partial\surf} \inner{}{\qb,\nb} dl = 0 $ and co-normal $ \nb $ holds
    \begin{empheq}[box=\fbox]{align}
        \mathring{\psi} 
            &= -\div\qb: \hspace{3em}
          \ddt \int_{\surf}\psi\dS
            = 0 \formPeriod 
    \end{empheq}
\end{proposition}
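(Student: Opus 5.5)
The plan is to combine the transport theorem \eqref{eq:transport_theorem} with the definition \eqref{eq:truesdell_rate} of the scalar Truesdell time derivative, and then apply the divergence theorem on $\surf$ for tangential vector fields.

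\emph{Step 1.} Rewrite the transport theorem through the Truesdell rate. By \eqref{eq:truesdell_rate}, the integrand $\dot{\psi} + \psi\DivC\Vb$ in \eqref{eq:transport_theorem} is exactly $\mathring{\psi}$. This gives
\begin{align*}
    \ddt\int_\surf\psi\dS = \int_\surf\mathring{\psi}\dS \formPeriod
\end{align*}
The right-hand side is observer invariant. The observer-dependent pieces, $\partial_t\psi$ and $-\nabla_{\vbobs}\psi$, combine into the observer-independent material derivative, so the choice of observer does not matter. I would note this briefly so that the argument holds for any parameterization used to evaluate the time derivative.

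\emph{Step 2.} Insert the hypothesis $\mathring{\psi} = -\div\qb$ with $\qb\in\tangentS$. Then apply the surface divergence theorem (Stokes) for tangential vector fields on the surface $\surf\vert_t$ at fixed time:
\begin{align*}
    \int_\surf \div\qb \dS = \int_{\partial\surf}\inner{}{\qb,\nb}\, dl \formComma
\end{align*}
where $\nb$ is the outward co-normal. Since $\qb$ is tangential, no curvature term appears. This is the point where the tangential divergence $\div$ is the right operator, rather than $\DivC$, whose divergence theorem would pick up a $\meanc$-term from normal components. The boundary integral vanishes by assumption, so $\ddt\int_\surf\psi\dS = 0$. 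For closed surfaces the boundary condition is void and the result holds automatically.

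\emph{Main obstacle.} The only delicate point is the validity of \eqref{eq:transport_theorem} when $\surf$ has a boundary. On a surface with boundary, the transport theorem carries an extra flux term through $\partial\surf$, coming from the tangential velocity normal to the boundary, $\inner{}{\vb,\nb}$, unless the material domain is transported with the flow. I would handle this by taking the integral over a material surface patch, or equivalently by assuming $\inner{}{\vb,\nb}=0$ on $\partial\surf$, as implicitly done in the paper's use of \eqref{eq:transport_theorem}. Under that standing assumption, the remaining steps are immediate.
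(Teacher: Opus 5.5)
Your proposal is correct and matches the paper's proof. The paper likewise identifies the integrand of the transport theorem \eqref{eq:transport_theorem} with $\mathring{\psi}$ and then applies Gauss' theorem together with the vanishing boundary flux of $\qb$. Your caveat about surfaces with boundary is a reasonable precaution that the paper leaves implicit. Strictly, though, the extra term involves the co-normal component of $\vb$ relative to the tangential motion of $\partial\surf$ itself. So $\inner{}{\vb,\nb}=0$ is sufficient only when the boundary has no such motion, and it is not equivalent to integrating over a material patch.
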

\begin{proof}
    This follows by applying the transport \eqref{eq:transport_theorem} and Gauss' theorem.
\end{proof}

\section{($L^2,\HmOne$) - Surface Gradient Flows} \label{sec:SGF}

\subsection{General Setting}

We consider a surface energy $ \potenergy = \potenergy[\para,\psi] $ depending on $ \para$, a parameterization of the surface $\surf$, which serves as a proxy/realization for the surface, and $\psi$, a scalar field defined on $\surf$. The functional derivatives $ \DX\potenergy\in\tangentR $ and $ \Dpsi\potenergy\in\tangentScal $ are defined by variations
\begin{align*}
    \innerH{\tangentR}{\DX\potenergy, \Wb}
        &:= \innerH{\tangentR}{\deltafrac{\potenergy}{\para}, \Wb} \formComma
    &\innerH{\tangentScal}{\Dpsi\potenergy, \phi}
        &:= \innerH{\tangentScal}{\deltafrac{\potenergy}{\psi}, \phi}
\end{align*}
for all virtual displacements $ \Wb\in\tangentR $ and $ \phi\in\tangentScal $.
It is crucial to note that $ \DX\potenergy $ is not uniquely defined and depends on the choice of the gauge of surface independence,
\ie\, how $ \psi $ depends on $ \para $ a priori, see \cite{NitschkeSadikVoigt_IJoAM_2023}. 

Gradient-flow formulations are generally not invariant under this choice and it can therefore be advantageous to select the gauge of surface independence consistently with the chosen time derivative.
Corresponding to the scalar Truesdell time derivative we use the Truesdell gauge of surface 
independence
\begin{empheq}[box=\fbox]{align}\label{eq:truesdell_gauge}
    \eth_{\Wb}\psi 
        &= -\psi\DivC\Wb
\end{empheq}
to determine $ \DX\potenergy $ uniquely, 
where $ \eth_{\Wb}\psi[\para] = \ddfrac{\epsilon}\big\vert_{\epsilon=0}\psi[\para+\epsilon\Wb] $ is the deformation derivative \cite{NitschkeSadikVoigt_IJoAM_2023}, \resp\ local spatial variation, for $ \psi=\psi[\para] $ depending on the surface $ \surf $ represented by parameterization $ \para $.
We recall that $ \DivC\Wb = \div\wb - \wnor\meanc $ holds for the orthogonal decomposition $ \Wb=\wb+\wnor\normal $. Note that, similar to the scalar Truesdell time derivative, the Truesdell gauge of surface independence is equivalent to the lower-convected gauge of surface independence 
$ \eth^{\flat\flat}_{\Wb}(\psi\Eb) = (\eth_{\Wb}\psi + \psi\DivC\Wb)\Eb = \nullb $ \cite{NitschkeSadikVoigt_IJoAM_2023},
\ie\ $\eth_{\Wb}(\psi E_{ij}) = 0$ on the covariant proxy components.
Under \eqref{eq:truesdell_gauge} and following \cite{NitschkeSadikVoigt_IJoAM_2023}, the spatial partial variation in terms of functional derivatives yields
\begin{align}
    \innerH{\tangentR}{\frac{\partial\potenergy}{\partial\para}, \Wb}
        &= \innerH{\tangentR}{\DX\potenergy, \Wb} - \innerH{\tangentScal}{\Dpsi\potenergy, \eth_{\Wb}\psi} \notag\\%\label{eq:partial_variation_X_wo_gauge}\\
        &= \innerH{\tangentR}{\DX\potenergy, \Wb} + \innerH{\tangentScal}{\Dpsi\potenergy, \psi\DivC\Wb} \formPeriod \label{eq:partial_variation_X}
\end{align}
Note that there is not any a priori dependency of $ \psi $ on $ \para $, 
\ie\, it holds 
\begin{align}\label{eq:partial_variation_psi}
    \innerH{\tangentScal}{\frac{\partial\potenergy}{\partial\psi}, \phi} 
        &= \innerH{\tangentScal}{\Dpsi\potenergy, \phi} \formPeriod
\end{align}

We here consider the $ (L^2, \HmOne) $-surface gradient flow
\begin{empheq}[box=\fbox]{align}\label{eq:gradient_flow}
    \Mspat \Vb
        &= -\DX\potenergy\formComma
    &\Mdens\mathring{\psi}
        &= \Delta\Dpsi\potenergy \formComma
\end{empheq}
with $ \Delta = \div\circ\nabla: \tangentScal\rightarrow\tangentScal $ the Laplace-Beltrami operator,
suitable boundary conditions, and $ \Mspat, \Mdens > 0$ immobility coefficients.
Or equivalently, in terms of a flux vector $ \qb\in\tangentS $, this surface gradient flow reads
\begin{align}\label{eq:gradient_flow_with_q}
    \Mspat \Vb
        &= -\DX\potenergy\formComma
    &\Mdens\qb
        &= -\nabla\Dpsi\potenergy \formComma
    &\mathring{\psi}
        &= -\div\qb \formPeriod   
\end{align}
Note, that $\Mspat \Vb = -\DX\potenergy$ in general contains tangential and normal components. We will further elaborate on this for special choices of $\potenergy$ below. 
\begin{proposition}\label{prop:props_of_gradient_flow}
    The $ (L^2, \HmOne) $-gradient flow \eqref{eq:gradient_flow}, \resp\ \eqref{eq:gradient_flow_with_q}, \wrt\ the Truesdell gauge of surface independence \eqref{eq:truesdell_gauge}, 
    ensures the conservation property 
    \begin{empheq}[box=\fbox]{align*}
        \ddt \int_{\surf}\psi\dS
             &= 0 \formComma
    \end{empheq}
    and a proper energy dissipation
    \begin{empheq}[box=\fbox]{align}
        \ddt\potenergy \label{eq:energy_rate}
            &= -\left( \Mspat^{-1} \normHsq{\tangentR}{\DX\potenergy} 
                      +\Mdens^{-1} \normHsq{\tangentS}{\nabla\Dpsi\potenergy}\right) \\
            &= -\left( \Mspat \normHsq{\tangentR}{\Vb} 
                      +\Mdens \normHsq{\tangentS}{\qb} \right)
             \le 0\formPeriod
    \end{empheq}
\end{proposition}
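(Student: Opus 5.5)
Conservation follows at once from \cref{prop:truesdell_conservation_prop}. In the flux form \eqref{eq:gradient_flow_with_q} we have $\mathring{\psi} = -\div\qb$ with $\qb = -\Mdens^{-1}\nabla\Dpsi\potenergy$. The ``suitable boundary conditions'' are assumed to include the no-flux condition $\inner{}{\qb,\nb}=0$ on $\partial\surf$ (or $\partial\surf=\emptyset$), so the hypothesis of \cref{prop:truesdell_conservation_prop} holds and $\ddt\int_\surf\psi\dS = 0$.

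For the energy rate, I write the total time derivative of $\potenergy[\para,\psi]$ via the chain rule in partial variations. The spatial variation is taken along the material velocity $\Vb = \partial_t\para$. Since the partial variation in $\para$ keeps $\psi$ fixed, the scalar variation is the corresponding rate $\dot\psi$ (material time derivative, the rate seen along $\Vb$):
\begin{align*}
    \ddt\potenergy
        &= \innerH{\tangentR}{\frac{\partial\potenergy}{\partial\para}, \Vb} + \innerH{\tangentScal}{\frac{\partial\potenergy}{\partial\psi}, \dot{\psi}} \formPeriod
\end{align*}
Substituting \eqref{eq:partial_variation_X} with $\Wb=\Vb$ and \eqref{eq:partial_variation_psi} with $\phi=\dot\psi$ gives
\begin{align*}
    \ddt\potenergy
        &= \innerH{\tangentR}{\DX\potenergy, \Vb} + \innerH{\tangentScal}{\Dpsi\potenergy, \dot{\psi} + \psi\DivC\Vb}
         = \innerH{\tangentR}{\DX\potenergy, \Vb} + \innerH{\tangentScal}{\Dpsi\potenergy, \mathring{\psi}} \formComma
\end{align*}
where the Truesdell gauge \eqref{eq:truesdell_gauge} combines exactly with the material derivative into $\mathring\psi$ by \eqref{eq:truesdell_rate}. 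This is the main step to justify carefully. One must argue that the observer-dependent parts, $\partial_t\psi$ and the $\Vbobs$ terms, cancel so the chain rule is observer invariant; the cleanest route is to work in the material (Lagrangian) observer, where $\partial_t\psi=\dot\psi$ and $\partial_t\para=\Vb$.

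Finally I insert the flow \eqref{eq:gradient_flow}. The first term becomes $-\Mspat^{-1}\normHsq{\tangentR}{\DX\potenergy}$. For the second, $\mathring\psi = \Mdens^{-1}\Delta\Dpsi\potenergy$. Integrating by parts on $\surf$ (Green's formula for $\Delta=\div\nabla$, with the boundary term vanishing by the no-flux condition) gives $-\Mdens^{-1}\normHsq{\tangentS}{\nabla\Dpsi\potenergy}$. Using $\Mspat\Vb=-\DX\potenergy$ and $\Mdens\qb=-\nabla\Dpsi\potenergy$ converts this to the second form $-(\Mspat\normHsq{\tangentR}{\Vb}+\Mdens\normHsq{\tangentS}{\qb})\le 0$, since $\Mspat,\Mdens>0$.
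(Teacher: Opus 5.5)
Your proposal is correct and follows the paper's proof step for step. Conservation comes from \cref{prop:truesdell_conservation_prop}; then the chain rule with the partial variations \eqref{eq:partial_variation_X} and \eqref{eq:partial_variation_psi}, paired with $\Vb$ and $\dot{\psi}$, lets the Truesdell gauge combine $\dot{\psi}+\psi\DivC\Vb$ into $\mathring{\psi}$, after which you insert \eqref{eq:gradient_flow} and integrate by parts. Your explicit no-flux condition and your remark about working in the material observer only spell out what the paper leaves implicit as ``convenient boundary conditions''.
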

\begin{proof}
    The conservation property is already given by \cref{prop:truesdell_conservation_prop}.
    Chain rule, partial variations \eqref{eq:partial_variation_X} and \eqref{eq:partial_variation_psi}, gradient flow \eqref{eq:gradient_flow}, \resp\ \eqref{eq:gradient_flow_with_q}, 
    and integration by parts with convenient boundary conditions, yield
    \begin{align}
        \ddt\potenergy
            &= \innerH{\tangentR}{\frac{\partial\potenergy}{\partial\para}, \Vb}
                    + \innerH{\tangentScal}{\frac{\partial\potenergy}{\partial\psi}, \dot{\psi}}
             =  \innerH{\tangentR}{\DX\potenergy, \Vb}
                    + \innerH{\tangentScal}{\Dpsi\potenergy, \dot{\psi} + \psi\DivC\Vb}\notag\\
            &= \innerH{\tangentR}{\DX\potenergy, \Vb}
                    + \innerH{\tangentScal}{\Dpsi\potenergy, \mathring{\psi}} \label{eq:ddtU_1}\\
%            &= - \Mspat^{-1} \normHsq{\tangentR}{\DX\potenergy}
%               + \Mdens^{-1} \innerH{\tangentScal}{\Dpsi\potenergy, \Delta\Dpsi\potenergy}\notag\\
            &= -\left( \Mspat^{-1} \normHsq{\tangentR}{\DX\potenergy} 
                      +\Mdens^{-1} \normHsq{\tangentS}{\nabla\Dpsi\potenergy}\right) \formPeriod\notag
    \end{align}
\end{proof}
%\begin{remark}
%    In \cref{sec:gradientflow_normal}, we consider the ($L^2,\HmOne$)-surface gradient flow associated with a purely normal evolution of the surface. We demonstrate that this flow can be obtained by imposing the constraint $\vb = \nullb$ on \eqref{eq:gradient_flow}, or equivalently, by a variation of the energy $ \potenergy $ solely in normal direction.
%\end{remark}

\subsection{Surface Tension Flows}\label{sec:energies}

Having established the general surface gradient flow framework in the previous section, we now turn to a specific class of flows, namely surface tension driven flows. For this purpose we consider the surface energy
\begin{empheq}[box=\fbox]{align}\label{eq:energyHT}
    \energyHT
        &:= \int_{\surf} f(\psi) \dS\formComma
\end{empheq}
where $ f(\psi)\in\tangentScal $ depends solely on the scalar field $ \psi\in\tangentS $, 
\ie\ neither on derivatives of $ \psi $ nor any geometric quantities of $ \surf $. 
As a consequence, the following chain rules hold 
\begin{align}\label{eq:chain_rule_HT}
    \eth_{\Wb}f(\psi)
        &= f'(\psi)\eth_{\Wb}\psi \formComma
    &\nabla f(\psi)
        &= f'(\psi)\nabla\psi \formComma    
\end{align}
where $ f'(\psi) = \partial_\psi f(\psi) $ is valid.
Variation \wrt\ $ \psi $ simply yields $ \Dpsi\energyHT = f'(\psi) $.
With the Truesdell gauge of surface independence $ \eth_{\Wb}\psi = -\psi\DivC\Wb $ \eqref{eq:truesdell_gauge},
chain rule \eqref{eq:chain_rule_HT},
and sufficient boundary conditions,
spatial variation results in
\begin{align*}
    \innerH{\tangentR}{\DX\energyHT,\Wb} 
        &= \int_{\surf} \eth_{\Wb}f(\psi) + f(\psi)\DivC\Wb\dS
         = \innerH{\tangentScal}{f'(\psi), \eth_{\Wb}\psi}
           +\innerH{\tangentScal}{f(\psi), \DivC\Wb} \\
        &= \innerH{\tangentScal}{f(\psi) - \psi f'(\psi), \DivC\Wb}
\end{align*}
for all virtual displacements $ \Wb\in\tangentR $.
Therefore, sufficient boundary conditions yield
\begin{align*}
    \DX\energyHT = -\GradC( f(\psi) - \psi f'(\psi)) \formPeriod
\end{align*}
Note that for all scalar fields $ \phi\in\tangentScal $ hold $ \GradC \phi = \DivC(\phi\IdS) $, which allows to rewrite this into a pure isotropic stress formulation.
We would also like to recall that the orthogonal decomposition $ \GradC\phi = \nabla\phi + \phi\meanc\normal\formPeriod $ applies.
Therefore, using the chain rule \eqref{eq:chain_rule_HT}, the tangential part of $ \DX\energyHT $ becomes
\begin{align*}
    \IdS\DX\energyHT 
        &= \psi\nabla f'(\psi) + f'(\psi)\nabla\psi - f'(\psi)\nabla\psi 
         = \psi\nabla f'(\psi)
         = \psi f''(\psi) \nabla\psi\formPeriod
\end{align*}
%We observe that $ \IdS\DX\energyHT = \psi\nabla\Dpsi\energyHT $ holds, as expected also for energies more general than \eqref{eq:energyHT}.
The normal part of $ \DX\energyHT $ reads
\begin{align*}
    \normal\DX\energyHT
        &= \left( \psi f'(\psi) - f(\psi) \right)\meanc\formPeriod
\end{align*}
We summarize the functional derivatives:
\begin{empheq}[box=\fbox]{align}\label{eq:D_HT}
\begin{aligned}
    \Dpsi\energyHT
        &= f'(\psi) \formComma
    &&&\DX\energyHT
        &= -\GradC( f(\psi) - \psi f'(\psi)) 
         = \psi f''(\psi) \nabla\psi
            +\left( \psi f'(\psi) - f(\psi) \right)\meanc\normal \formPeriod
\end{aligned}
\end{empheq}
Eventually, the surface gradient flow \eqref{eq:gradient_flow} reads
\begin{empheq}[box=\fbox]{align}\label{eq:S_gradientflow}
\begin{aligned}
    \Mspat \Vb
        &= \GradC\left( f(\psi) - \psi f'(\psi) \right)\formComma
    &&\text{\resp}\quad
    \left\{\begin{aligned}
        \Mspat\vb 
            &= -\psi f''(\psi)\nabla\psi \formComma\\
         \Mspat \vnor
            &= \left( f(\psi) - \psi f'(\psi)\right)\meanc
    \end{aligned}\right\}\formComma\\
    \Mdens\mathring{\psi}
        &= \div(f''(\psi)\nabla\psi)\formComma
    &&\text{\resp}\quad
    \Mdens\mathring{\psi}
        = f''(\psi)\Delta\psi + f'''(\psi)\normsq{}{\nabla\psi}\formPeriod
\end{aligned}
\end{empheq}
The resulting system of equations couples a geometric evolution equation for the evolution of $\surf$ in normal directions, an equation for tangential movement and a scalar-valued surface partial differential equation on the evolving surface. Alternatively, it may be advantageous, for instance in a numerical implementation, to eliminate the material tangential velocity by substitution.
Inserting the tangential velocity obtained from the surface gradient flow \eqref{eq:S_gradientflow} into the scalar Truesdell time derivative yields
\begin{align}\label{eq:truesdell_velosubs}
    \mathring{\psi}
        &= \mathring{\psi}\vert_{\vb=\nullb}
           -\Mspat^{-1} \div\left( \psi^2 f''(\psi)\nabla\psi \right)\formComma
    &\text{where}\quad
    \mathring{\psi}\vert_{\vb=\nullb}
        &= \partial_t\psi - \left( \nabla_{\vbobs}\psi + \vnor\psi\meanc \right)
\end{align}
is the Truesdell rate for material evolving only in normal direction,
the tangential observer velocity $ \vbobs $ remaining entirely arbitrary.
Due to this, the surfac gradient flow \eqref{eq:S_gradientflow} can be written as
\begin{empheq}[box=\fbox]{align}\label{eq:gradient_flow_vsubs}
\begin{aligned}
    \Mspat\vnor 
        &= \left( f(\psi) - \psi f'(\psi)\right)\meanc\formComma\\
    \Mdens\mathring{\psi}\vert_{\vb=\nullb}
        &= \div\left( \left( 1+\frac{\Mdens}{\Mspat}\psi^2 \right) f''(\psi)\nabla\psi \right)\\
        &= \left( 1 + \psi^2\frac{\Mdens}{\Mspat} \right)f''(\psi)\Delta\psi
           +\left( \left( 1 + \psi^2\frac{\Mdens}{\Mspat}\right)f'''(\psi) 
                    +2\psi\frac{\Mdens}{\Mspat}f''(\psi)\right)\normsq{}{\nabla\psi}\formPeriod
\end{aligned}
\end{empheq}

The functional derivatives \eqref{eq:D_HT} and \cref{prop:props_of_gradient_flow} yield the energy rate
\begin{align}
    \ddt\energyS
        &= -\frac{1}{\Mspat}\label{eq:ddt_energySO}
                \int_{\surf} 
                    \left( \psi f'(\psi) -  f(\psi) \right)^2\meanc^2 
                       + \left( \psi^2 + \frac{\Mspat}{\Mdens} \right)( f''(\psi))^2\normsq{}{\nabla\psi} 
                \dS \leq 0 \formPeriod
\end{align}
Hence, energy dissipation for the gradient flows \eqref{eq:S_gradientflow} and \eqref{eq:gradient_flow_vsubs} is guaranteed.

As the spatial evolution is given exclusively by a gradient term, its argument 
\begin{empheq}[box=\fbox]{align}\label{eq:surface_tension}
    \sigma(\psi) 
            &:=   f(\psi) - \psi f'(\psi)
\end{empheq}
can be identified with a surface tension, \resp\ generalized pressure or isotropic stress.
As a consequence, the surface gradient flow \eqref{eq:S_gradientflow} also reads
\begin{empheq}[box=\fbox]{align}\label{eq:gradient_flow_SO_wrt_tension}
\begin{aligned}
    \Mspat \Vb
        &= \GradC \sigma(\psi)\formComma
    &&\text{\resp}\quad
    \left\{\begin{aligned}
        \Mspat\vb 
            &= \sigma'(\psi)\nabla\psi \formComma\\
         \Mspat \vnor
            &= \sigma(\psi)\meanc
    \end{aligned}\right\}\formComma\\
    \Mdens\mathring{\psi}
        &= -\div\left(\frac{\sigma'(\psi)}{\psi}\nabla\psi\right)\formComma
    &&\text{\resp}\quad
    \Mdens\mathring{\psi}
        =  -\frac{\sigma'(\psi)}{\psi}\Delta\psi
           +\frac{\sigma'(\psi)-\psi\sigma''(\psi)}{\psi^2}\normsq{}{\nabla\psi}\formComma
\end{aligned}
\end{empheq}
where we used that $ \sigma'(\psi) = -\psi f''(\psi) $ and $ \sigma''(\psi) = -(f''(\psi) + \psi f'''(\psi))$ hold. 
The energy rate  \eqref{eq:ddt_energySO} in terms of surface tension states
\begin{align*}
    \ddt\energyS
        &= -\frac{1}{\Mspat}
                \int_{\surf} 
                    \sigma(\psi)^2\meanc^2 
                       + \left( 1 + \frac{\Mspat}{\Mdens\psi^2} \right) \sigma'(\psi)^2\normsq{}{\nabla\psi} 
                \dS \leq 0 \formPeriod
\end{align*}

\begin{remark}
    In \cref{ex:surface_tension_flow_normal} (\cref{sec:gradientflow_normal}) we consider the gradient flow formulations \eqref{eq:S_gradientflow} and \eqref{eq:gradient_flow_SO_wrt_tension} restricted to pure normal evolution of the surface.
    This is the situation considered in most of the cited applications in the introduction and has also been addressed in \cite{ABELS2023236,ABELS202314}, where 
    qualitative properties are discussed and short time existence is shown under various assumptions on the functions $f(\psi)$ and $\sigma(\psi)$. 
\end{remark}

We here refrain from such analytical investigations for the problem including tangential flow and solely focus on the conservation and energy dissipation properties considered in \cref{prop:props_of_gradient_flow}. We discuss some readily accessible special cases:

A constant energy density, \resp\ surface tension, $ f(\psi) = \sigma(\psi) \equiv c > 0$ leads to the classical mean curvature flow
\begin{align*}
   \Mspat\vnor
        &= c \meanc\formComma
\end{align*}
where tangential flow vanishes, \ie\ $ \vb = \nullb $, and the initial density is just transported conservatively, \ie\ $ \mathring{\psi}=0 $,
\resp\ $ \dot{\psi} = \vnor\meanc = \frac{c}{\Mspat} \meanc^2 $. For analytical results we refer to \cite{huisken1984flow} and for numerical approaches, see \cite{Deckelnick_Dziuk_Elliott_2005}.

A linear energy density $ f(\psi) = c\psi $, \resp\ vanishing surface tension $ \sigma(\psi)\equiv 0 $, leads to the static state $ \surf = \surf\vert_{t_0} $, \ie\ $ \Vb=\nullb $, and $ \mathring{\psi}=\dot{\psi}=0 $.
This is trivial, since $ \energyS  $ is conserved by construction in that case.

A quadratic energy density $ f(\psi) = \frac{c}{2}\psi^2 $, \resp\ quadratic surface tension $ \sigma(\psi) = -\frac{c}{2}\psi^2 $, leads to a density-weighted mean-curvature flow with conserved density diffusion:
\begin{align}\label{eq:psi2_gradient_flow}
    \Mspat\vb
        &= -c\psi\nabla\psi\formComma
    &\Mspat\vnor
        &= -\frac{c}{2}\psi^2\meanc\formComma
    &\Mdens\mathring{\psi}
        &= c\Delta\psi\formPeriod
\end{align}
As this model provides a particularly simple, though nontrivial, example, 
we take this opportunity to discuss the effect of the choice of the Truesdell gauge of surface independence and time derivative within this setting. Variation from the scratch of the scalar surface energy \eqref{eq:energyHT} for $ f(\psi) = \frac{c}{2}\psi^2 $ in arbitrary displacement directions $ \Wb\in\tangentR $, and considering a different gauge of surface independence, here the material gauge of surface independence ($\eth_{\Wb}\psi=0$, \cite{NitschkeSadikVoigt_IJoAM_2023}) yields
\begin{align*}
    \innerH{\tangentR}{\deltafrac{\energyS}{\para},\Wb}
        &= \frac{c}{2}\int_{\surf} \eth_{\Wb}\psi^2 + \psi^2\DivC\Wb\dS\\
        &= \frac{c}{2}\left(\innerH{\tangentScal}{\psi, \eth_{\Wb}\psi} - \innerH{\tangentR}{\GradC\psi^2,\Wb}\right)\\
        &= \innerH{\tangentR}{\widetilde{\DX}\energyS,\Wb} \formComma
\end{align*}
where $\widetilde{\DX}\energyS = -\frac{c}{2}\GradC\psi^2$ and $ \GradC\psi^2 = 2\psi\nabla\psi + \psi^2\meanc\normal $. This leads to the gradient flow
\begin{align}\label{eq:psi2_gradient_flow_material_gauge}
    \Mspat\vb
        &= c\psi\nabla\psi\formComma
    &\Mspat\vnor
        &= \frac{c}{2}\psi^2\meanc\formComma
    &\Mdens\mathring{\psi}
        &= c\Delta\psi\formComma
\end{align}
\ie\, the spatial forces are in opposite direction contrarily to the gradient flow \eqref{eq:psi2_gradient_flow}, where we use the Truesdell gauge of surface independence ($\eth_{\Wb}\psi=-\psi\DivC\Wb$). Calculating the energy rate with the transport theorem \wrt\ the scalar Truesdell time derivative \eqref{eq:truesdell_rate} reveals
\begin{align*}
    \ddt\energyS 
        &= c\left( \innerH{\tangentScal}{\psi,\dot{\psi}} + \frac{1}{2}\innerH{\tangentScal}{\psi^2,\DivC\Vb} \right)
         = c\left( \innerH{\tangentScal}{\psi,\mathring{\psi}} - \frac{1}{2}\innerH{\tangentScal}{\psi^2,\DivC\Vb} \right)\\
        &= c\left( \innerH{\tangentScal}{\psi,\mathring{\psi}} + \frac{1}{2}\innerH{\tangentR}{\GradC\psi^2,\Vb} \right)\\
        &= c\left( \innerH{\tangentScal}{\psi,\mathring{\psi}} + \innerH{\tangentS}{\psi\nabla\psi,\vb} + \frac{1}{2}\innerH{\tangentScal}{\psi^2\meanc,\vnor} \right)\formComma
\end{align*}
which is consistent with \eqref{eq:ddtU_1}.
Substitution of the gradient flows into this energy rate yields
\begin{align*}
    \ddt\energyS
        &= -\frac{c^2}{\Mdens}\normHsq{\tangentS}{\nabla\psi}
           + \begin{cases}
                -\frac{c^2}{\Mspat}\left( \normHsq{\tangentS}{\psi\nabla\psi} + \frac{1}{4}\normHsq{\tangentScal}{\psi^2\meanc} \right)\formComma
                        &\text{for \eqref{eq:psi2_gradient_flow},} \\
                +\frac{c^2}{\Mspat}\left( \normHsq{\tangentS}{\psi\nabla\psi} + \frac{1}{4}\normHsq{\tangentScal}{\psi^2\meanc} \right)\formComma
                        &\text{for \eqref{eq:psi2_gradient_flow_material_gauge},}
             \end{cases}
\end{align*}
where the first case is consistent to \cref{prop:props_of_gradient_flow} and gradient flow \eqref{eq:psi2_gradient_flow} ensures a decreasing energy.
Contrarily, the surface gradient flow \eqref{eq:psi2_gradient_flow_material_gauge} could increase the energy, especially if $ \Mspat $ is sufficiently smaller than $ \Mdens $.

\begin{remark}
Note that the spatial forces $ -\widetilde{\DX}\energyS $ as well as $  -\DX\energyS $ are solely in terms of $ \GradC $
and the Helmholtz decomposition $ \tangentR = \GradC\tangentScal \oplus \ker(\DivC\vert_{\tangentR})$ holds.
Therefore, in an inextensible setting ($ \DivC\Vb = 0 $), the divergence-free part of both, $ \widetilde{\DX}\energyS $ as well as $  \DX\energyS $, is identically zero. As a consequence, both resulting models would be equal with no surface motion resulting in $\Mdens (\partial_t \psi - \nabla_{\vbobs}\psi) = c \Delta \psi$ and thus reducing to the standard diffusion equation for an Eulerian, \ie\ stationary, observer ($\vbobs = 0$) and leading to an energy rate of $ \ddt\energyS = -\frac{c^2}{\Mdens}\normHsq{\tangentS}{\nabla\psi} $.
\end{remark}

While these examples are either trivial, or considered for illustration purposes only, also physically relevant formulations arise. In the context of two-phase flows with surfactants free energy contributions based on a logarithmic Flory--Huggins type potential are considered. Originally introduced in the context of polymer solutions, this energy can be interpreted more generally as the configurational entropy of mixing on a discrete set of sites 
(see \eg\ \cite{Flory_TJoCP_1942,Huggins_TJoCP_1941,Doi_2013}).
In the present setting, we interpret the scalar field $ \psi $ as the local surface coverage of surfactant molecules, \ie\ $ 0<\psi< 1 $.
The actual surfactant concentration is given by $ \psi_c = \psi_{\infty}\psi $, with $ \psi_{\infty} $ representing the maximum possible concentration on the surface.
The complementary fraction $1 - \psi$ then represents the available free surface area, \ie\ unoccupied adsorption sites. 
Under this interpretation, the surface is viewed as a collection of finitely many equivalent sites that can either be occupied by surfactant molecules or remain empty. 
The number of possible configurations associated with a given coverage $ \psi $ leads, via standard combinatorial arguments \wrt\ a lattice model \cite{Doi_2013}, to the entropy density  
$ \beta(\psi\ln\psi + (1-\psi)\ln(1-\psi)) $ with $\beta > 0 $.
Additionally, we include a purely geometric energy density $ \sigma_0 > 0$, independent of the surfactant concentration.
Finally, we take into account interactions between solute surfactant molecules by introducing an interaction energy density $ \chi(\psi(1-\psi)) $ \cite{Doi_2013} with $\chi\in\R$. 
Combining all contributions, we obtain the energy density
\begin{align}\label{eq:energy_dens_surfactants}
    f(\psi) = \sigma_0 + \beta\left( \psi\ln\psi + (1-\psi)\ln(1-\psi) \right) + \chi\psi(1-\psi)\formPeriod
\end{align}
The parameters $ \beta > 0 $ and $ \chi\in\R $ relate the number of available sites for the surfactants to the surface area in a temperature-dependent manner and in dependence on the effective interaction energy and the coordination number, see \cite{Doi_2013}. 
The logarithmic contribution in \eqref{eq:energy_dens_surfactants} naturally enforces the physically admissible bounds ($ 0<\psi< 1 $), as the free energy diverges when the interface becomes either completely depleted or fully saturated. Note that $ f $ becomes a double-well potential for $ \chi > 2\beta $, \ie\, phase separation is to be expected in that case. The corresponding surface tension \eqref{eq:surface_tension}, yields
\begin{align} \label{eq:energy_dens_surfactants_sigma}
    \sigma(\psi) 
        &= \sigma_0 + \beta \ln(1-\psi) + \chi\psi^2 \formPeriod
\end{align}
In the absence of interaction ($ \chi=0 $), this corresponds to the Langmuir equation of state \cite{LANGMUIR} and is frequently used in two-phase flows with surfactants, \eg, \cite{VelankarZhouEtAl_JoCaIS_2004,SmanGraaf_RA_2006,ErikTeigenSongEtAl_JoCP_2011,EngblomDo-QuangEtAl_CiCP_2013,Barrett2014421,YangJu_CMiAMaE_2017,FRACHON2023111734,vanSluijs_2025}. In this context $\GradC \sigma(\psi) = \DivC (\sigma(\psi) \IdS) $ acts as an interface force and balances the jump of the stresses in the bulk domains. The interface force contains tangential $\sigma'(\psi)\nabla\psi$ and normal $\sigma(\psi) \meanc$ components. In our context the corresponding interface forces determine the tangential and normal velocity and the surface gradient flow \eqref{eq:S_gradientflow}, \resp\ \eqref{eq:gradient_flow_SO_wrt_tension}, reads
\begin{align*} 
\begin{aligned}
    \Mspat \Vb
        &= \GradC \left( \sigma_0 + \beta\ln(1-\psi) + \chi\psi^2 \right)\formComma
    &&\text{\resp}\quad
    \left\{\begin{aligned}
        \Mspat\vb 
            &= \left( 2\chi\psi - \frac{\beta}{1-\psi} \right)\nabla\psi \formComma\\
         \Mspat \vnor
            &= \left(\sigma_0 + \beta\ln(1-\psi) + \chi\psi^2 \right)\meanc
    \end{aligned}\right\}\formComma\\
    \Mdens\mathring{\psi}
        &= \div\left(\left( \frac{\beta}{\psi(1-\psi)} - 2\chi \right)\nabla\psi\right)\formComma
    &&\text{\resp}\quad
    \Mdens\mathring{\psi}
        = \left( \frac{\beta}{\psi(1-\psi)} - 2\chi \right)\Delta\psi - \beta\frac{1-2\psi}{\psi^2(1-\psi)^2}\normsq{}{\nabla\psi}\formPeriod\\
\end{aligned}
\end{align*}
Also the evolution for $\psi$ is mostly considered in this form in the context of two-phase flows with surfactants \cite{ErikTeigenSongEtAl_JoCP_2011}. However, mostly by approximating the diffusion using a constant diffusion coefficient, which does not alter the conservation property. In the context of two-phase flow with surfactants the tangential force is know to lead to the Marangoni effect \cite{marangoni1865sull}. %We will numerically demonstrate that the tangential force also contributes to the evolution in ($L^2, \HmOne$)-surface gradient flows, without considering any bulk domain. 
The derived ($L^2,\HmOne$) - surface gradient flow with $f(\psi)$ given by \eqref{eq:energy_dens_surfactants} demonstrates that the tangential flow also influences the evolution without considering any bulk domain and should be taken into account for a quantitative description of the dynamics. 

We will computationally quantify the differences between the gradient flow formulations \eqref{eq:S_gradientflow} and \eqref{eq:gradient_flow_SO_wrt_tension} and the restricted models with pure normal evolution of the surface in \cref{ex:surface_tension_flow_normal} (\cref{sec:gradientflow_normal}) considering the energy density \eqref{eq:energy_dens_surfactants} or the surface tension \eqref{eq:energy_dens_surfactants_sigma} with $\chi = 0$ in \cref{sec:C}.

\subsection{Surface Tension Flows in Height Observer Formulation}

Computationally more trackable than the previously considered formulations are formulations using a height observer. We therefor also formulate the equations for surface tension flows in this setting. Following the notation and results in \cref{sec:height_formulation_setting} for a height observer, the surface gradient flow \eqref{eq:S_gradientflow} reads: 

\noindent
Find the covariant material tangential velocity proxy field $ \vb^{\flat}=[v_x,v_y]$, height field $ h\in\tangentScal $ and density field $ \psi\in\tangentScal $ \soth
\begin{empheq}[box=\fbox]{align}\label{eq:gradient_flow_SO_height}
\begin{aligned} 
        \Mspat \vb^{\flat}
            &= -\psi f''(\psi)\partial\psi\formComma\\
        \Mspat \partial_t h
            &= |g|\left(  f(\psi) - \psi f'(\psi)\right)\hfrak\formComma\\
        \Mdens\left(\partial_t \psi - \left( \psi\hfrak+\frac{\partial\psi\cdot\dh}{|g|} \right)\partial_t h
                    +\psi\left( \partial\cdot\vF - \frac{\dh\cdot\partial\vF\cdot\dh}{|g|}\right)
                    +\vF\cdot\left( \partial \psi - \left( \psi\Delta h+\frac{\partial\psi\cdot\dh}{|g|} \right)\dh \right) \right)\hspace{-30em}\\
            &=  f''(\psi)
                  \left(\partial\cdot\partial\psi - (\partial\psi \cdot\dh) \hfrak - \frac{\dh\cdot\partial^2\psi\cdot\dh}{|g|} \right)
                +  f'''(\psi)\left( \partial\psi\cdot\partial\psi - \frac{(\partial\psi\cdot\dh)^2}{|g|} \right)
    \end{aligned}
\end{empheq}
    holds, where
\begin{align*}
    \hfrak
        &= \frac{\meanc}{\sqrt{|g|}}
         =\frac{\partial\cdot\dh}{|g|} - \frac{\dh\cdot\ddh\cdot\dh}{|g|^2} \formComma &|g| &= 1 + (\dh)\cdot(\dh) \formPeriod
\end{align*}
While the height observer velocity is simply given by $ \Vbobs=[0,0,\partial_t h]^T $, the material velocity is obtained as
    \begin{align*}
        \Vb
            &= \vb +\vnor\normal
             =   \begin{bmatrix}
                     \vF \\ 0
                 \end{bmatrix}
                -\frac{\partial_t h + \vF \cdot \dh}{|g|} 
                    \begin{bmatrix}
                        \dh \\ -1
                    \end{bmatrix}\formPeriod
    \end{align*}
Considering instead \eqref{eq:gradient_flow_vsubs} and substituting $ \partial_t h $ we get: 

\noindent
Find height field $ h\in\tangentScal $ and density field $ \psi\in\tangentScal $ \soth
\begin{empheq}[box=\fbox]{align}\label{eq:gradient_flow_SO_new_height}
\begin{aligned}
    \Mspat \partial_t h
        &= |g|\left( f(\psi) - \psi f'(\psi)\right)\hfrak\formComma \\
     \Mdens\partial_t \psi
        &= \left( 1 + \psi^2\frac{\Mdens}{\Mspat} \right)f''(\psi)
                        \left(\partial\cdot\partial\psi - \frac{\dh\cdot\partial^2\psi\cdot\dh}{|g|} \right)\\
    &\quad +\left( \left( 1 + \psi^2\frac{\Mdens}{\Mspat}\right)f'''(\psi) 
                    +2\psi\frac{\Mdens}{\Mspat}f''(\psi)\right)
                        \left( \partial\psi\cdot\partial\psi - \frac{(\partial\psi\cdot\dh)^2}{|g|} \right)\\
    &\quad +\left( \frac{\Mdens}{\Mspat}\left( f(\psi) - \psi f'(\psi) - \psi^2 f''(\psi)\right) - f''(\psi)\right)
                                    (\partial\psi\cdot\dh)\hfrak\\
    &\quad + |g|\psi\frac{\Mdens}{\Mspat}\left( f(\psi) - \psi f'(\psi)\right)\hfrak^2
\end{aligned}
\end{empheq}
holds. The material velocity is obtained as
\begin{align*}
    \Vb
        &= -\frac{\psi f''(\psi)}{\Mspat}\nabla\psi +  \frac{\partial_t h}{\sqrt{|g|}}\normal
         = \frac{1}{|g|}\begin{bmatrix}
                            -\frac{|g|\psi f''(\psi)}{\Mspat}\partial\psi - \left( \partial_t h - \frac{\psi f''(\psi)}{\Mspat}(\partial\psi\cdot\dh) \right)\dh \\
                            \partial_t h - \frac{\psi f''(\psi)}{\Mspat}(\partial\psi\cdot\dh)
                        \end{bmatrix}\formPeriod
\end{align*}

Note that these formulations do not consider a small displacement assumption $\partial h \ll 1$, which is often used in height observer formulations. %The formulations provide an easy to implement form to numerically explore the effects of the various terms. However, here we only relate these formulations to simplified situations. 

Before we solve these formulations numerically and explore the effects of the various terms, lets also relate these formulations to simplified situations. 

For a constant surface tension $f(\psi) = c > 0$ the formulations lead to
\begin{align*}
   \Mspat\partial_t h   
        &= |g| c \hfrak\formComma 
\end{align*}
which can be rewritten in the more commonly used form for mean curvature flow in height observer formulation 
\begin{align*}
   \Mspat \frac{\partial_t h}{\sqrt{|g|}}    
        &= c \partial \cdot \left( \frac{\dh}{\sqrt{|g|}} \right)\formPeriod 
\end{align*}
Also in this case tangential flow vanishes, \ie\ $ \vb = \nullb $, and the initial density is transported, \ie\ $ \mathring{\psi}=0 $. For results on mean curvature flow in height observer formulations we refer to \cite{Deckelnick_Dziuk_Elliott_2005}. Other special cases, as discussed in the previous sections, can be formulated accordingly.

\section{Numerical Results and Conclusions} \label{sec:C}

We consider the derived height observer formulations with periodic boundary conditions and solve the equations numerically using a semi-implicit pseudospectral method. In order to explore the impact of the tangential velocity we compare the full problem \eqref{eq:gradient_flow_SO_height} with the restricted model with pure normal evolution of the surface \eqref{eq:gradient_flow_SO_height_normal} in \cref{ex:surface_tension_flow_normal} (\cref{sec:gradientflow_normal}). We consider the energy density \eqref{eq:energy_dens_surfactants} with $\sigma_0 = 1$, $\beta = 0.75$ and $\chi = 0$ and use parameters $M_{\mathbf X} = 5$ and $M_\psi = 1$. The computational domain is $(2\pi, 2\pi)$ with 128 discretisation points in each direction. Only a quarter of this domain is considered in \cref{fig:results}. The time step-size is $10^{-5}$. We consider the initial conditions $h_{init}(x,y) = \sin{2 x} \sin{2y}$ and $\psi_{init} = 0.25$. In \cref{fig:results} a) - c) basic properties of both models are confirmed, such as energy dissipation and conservation of $\psi$. Comparing these quantities already demonstrate the differences in the evolution between both models. The equilibrium state is reached faster for the full model. Furthermore, the emerging differences in $\psi$ are smaller for the full model. In \cref{fig:results} d) the initial height profile is shown together with a cut-plane indicated by a line, along we measure various quantities over time. In \cref{fig:results} e) and f) the solutions at $t = 0.2$ are shown for the full and restricted problem, respectively. The shape and the density $\psi$ are shown indicating a flattened morphology and anisotropic $\psi$ distribution. In \cref{fig:results} e) also the tangential velocity is shown which drives additional material flux away from the maxima of $\psi$. While differences in the shape are barely visible between both models, the density profile clearly differs as a result of the additional tangential relaxation. Such differences become most apparent in \cref{fig:results} g), h) and i), showing the solutions along the indicated cut-plane in \cref{fig:results} d) at various time instances. The largest difference becomes visible for the velocity components in tangential and normal direction $\Vb = \vb +\vnor\normal$. Basic numerical tests have been considered to ensure that these results are properties of the model and not numerical artefacts. These tests (not shown) demonstrate linear convergence of the conservation error with respect to time stepping and the same numerical energy dissipation rate as for the continuous model. 

We conclude the importance of tangential flows even in simple surface tension flows if the energy density depends on a scalar surface quantity. The tangential flow is induced by gradients of this scalar surface quantity and has an impact on the evolution of the surface. In the considered setting it provides an additional relaxation mechanism and leads to a faster energy decay in the ($L^2, \HmOne$) - surface gradient flow. The scalar Truesdell time derivatives together with the Truesdall gauge of surface independence provide useful tools to construct such flows also for more general energies and guarantee energy dissipation and conservation properties.

\begin{figure}[p]
\includegraphics*[angle = -0, width = 1 \textwidth ]{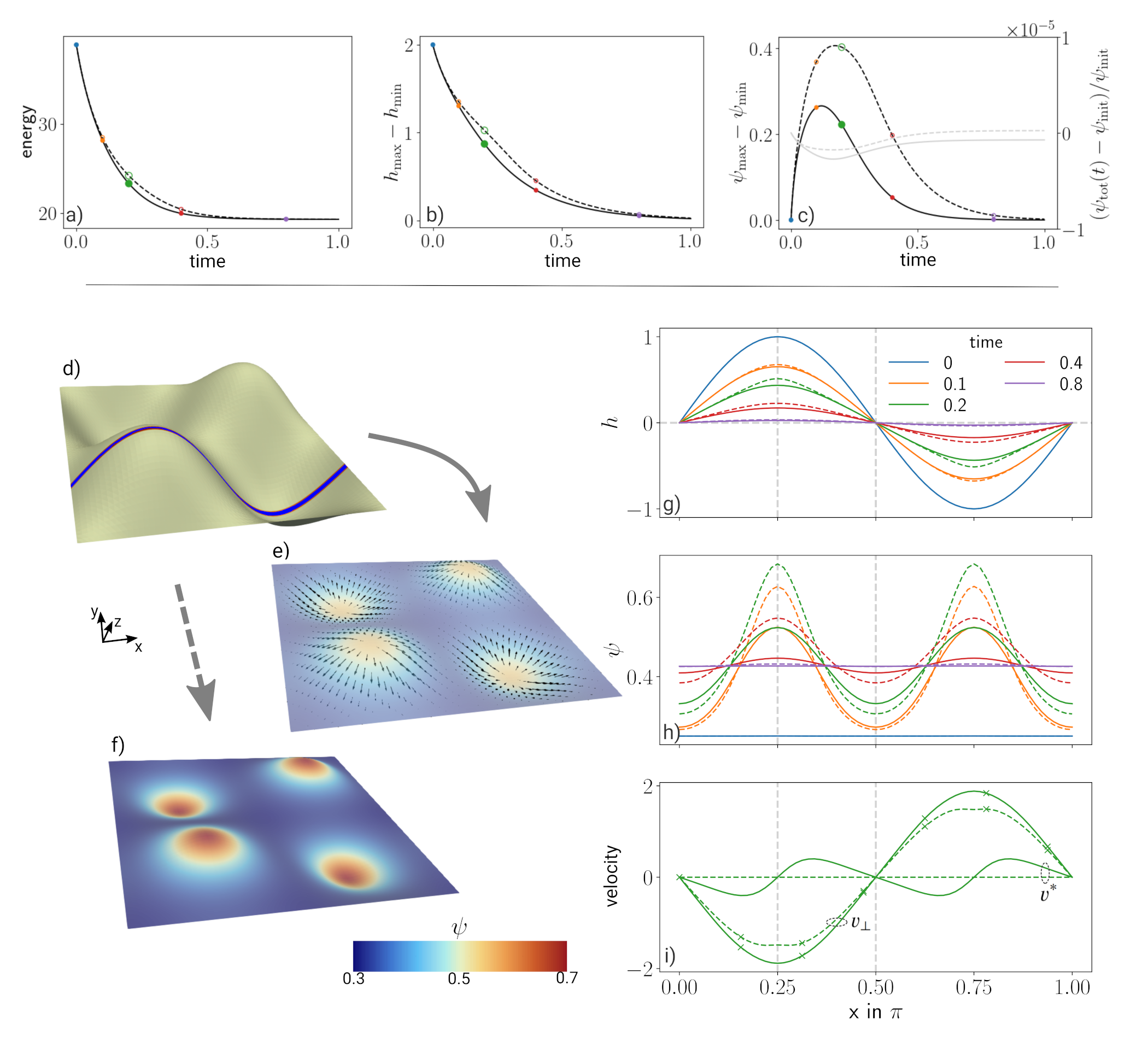}
\caption{Numerical results for surface relaxation. a)-c) Evolution of
  characteristic system properties: a) Energy decay, b) Surface flattening via
  maximum height difference, and c) Extrema of $\psi$ (black) with total mass
  conservation (gray). Symbols indicate time instances analyzed in
  h)-i). Solid lines correspond to the full model
  \eqref{eq:gradient_flow_SO_height}, dashed lines to the restricted model
  \eqref{eq:gradient_flow_SO_height_normal}. d) Initial surface configuration;
  the blue line indicates the cross-section used for analysis in g)-i) and
  $\psi$ is constant. e) Solution at $t=0.2$ for the full model with indicated
  tangential flow (black arrows). f) Solution at $t=0.2$ for restricted
  model. g) and h) Profiles of height $h$ and density $\psi$ for $t \in \{0,
  0.1, 0.2, 0.4, 0.8\}$, respectively. i) Tangential $v^* = \text{sgn}{(v_x)}
  ||\vb||$ and normal $\vnor$ velocity components of $\Vb$ at $t=0.2$.
\label{fig:results}}
\end{figure}

\appendix
\section*{Appendix}

\section{Height Observer Formulation} \label{sec:height_formulation_setting}

We consider a height observer given  by  parameterization \cite{BenoitMarechalNitschkeEtAl_MoM_2024}
\begin{empheq}[box=\fbox]{align*}
    \para_{\ofrak}:\ 
        (t,x,y)\mapsto \para_{\ofrak}(t,x,y)
        &:= \left[ x, y, h(t,x,y) \right]^T \in\surf\subset\R^3\formComma
\end{empheq} 
where $ h\in\tangentScal $ is the scalar height field, $ t\in\mathcal{T}\subseteq\R $ the time coordinate, and $ (x,y)\in\mathcal{U}\subseteq\R^2 $ local coordinates,
\ie\ $ (\surf, \para\vert_{t}^{-1}) $ is a time-dependent local chart.
The observer frame and normal field yield
\begin{align*}
    \partial_x \para_{\ofrak}
        &= \left[ 1, 0, \partial_x h  \right]^T\formComma
    &\partial_y \para_{\ofrak}
        &= \left[ 0, 1, \partial_y h  \right]^T\formComma
    &\normal
        &= \frac{\left[-\partial_x h, -\partial_y h, 1\right]^T}{\sqrt{|g|}}\formComma
\end{align*} 
where
\begin{align*}
    |g|
        &:= g_{11}g_{22} - 2g_{12}
          = 1 + (\partial_x h)^2 + (\partial_y h)^2 \formComma
    &g_{ij}
        &= \inner{}{\partial_i \para_{\ofrak}, \partial_j \para_{\ofrak}} 
         = \delta_{ij} + (\partial_i h)(\partial_j h) \formPeriod
\end{align*}
Moreover, the inverse observer metric stated
$
    g^{ij}
        = \delta^{ij} - \frac{(\partial_i h)(\partial_j h)}{|g|}\formComma
$
as a consequence of $ g_{ik}g^{kj} = \delta_i^j $.
Since we intend to employ a Cartesian $ \R^2 $ calculus on $ \mathcal{U} $, we define the partial differential operator
\begin{align*}
    \partial \bullet
        &= (\partial_x \bullet)\otimes\eb_x + (\partial_y \bullet)\otimes\eb_y \formComma
\end{align*}
where $ \bullet $ could be anything convenient, \eg\ a scalar or a matrix field, 
and $ \{\eb_x, \eb_y \} $ is a Cartesian frame on $ \mathcal{U}\subseteq\R^2 $.
For instance, we could write 
\begin{align*}
    |g| &= 1 + (\dh)\cdot(\dh)
\end{align*}
or for scalar fields $ f\in\tangentScal $
\begin{align*}
    \partial\cdot\partial f
        &= \partial_x^2 f + \partial_y^2 f \formPeriod
\end{align*}
We use musical symbols $ \flat $ and $ \sharp $ for local covariant and contravariant proxy notations,
\eg\ the covariant and contravariant metric tensor proxy reads
\begin{align*}
    \gb^{\flat}
        &= \IdTwo + \dh \otimes \dh \formComma
    &\gb^{\sharp}
        &= (\gb^{\flat})^{-1}
         = \IdTwo -\frac{\dh \otimes \dh}{|g|}\formComma
    &\IdTwo %= \gb^{\flat\sharp} = \gb^{\sharp\flat} 
        &= \gb^{\flat}\cdot\gb^{\sharp}
         = \gb^{\sharp}\cdot\gb^{\flat}\formComma
\end{align*} 
or for tangential vector fields $ \wb\in\tangentS $ holds
\begin{align*}
    \wF
        &= \left[ w_x, w_y \right]
         = \gb^{\flat} \cdot \wS
         = \wS + \left(  \wS \cdot \dh \right) \dh\formComma
    \hspace{4em}\wS
        = \left[ w^x, w^y \right]^T
         = \wF \cdot \gb^{\sharp}
         = \wF - \frac{\wF \cdot \dh}{|g|}\dh \formComma\\
    \wb &= w^x\partial_{x}\para_{\ofrak} + w^y\partial_{y}\para_{\ofrak}
         = \partial\para_{\ofrak} \cdot \wS
         = \begin{bmatrix}
                \wS \\ \wS\cdot\dh
            \end{bmatrix}
         = \partial\para_{\ofrak} \cdot \wF \cdot \gb^{\sharp}
         = \begin{bmatrix}
            \wF \\ 0
           \end{bmatrix}
           +\frac{\wF \cdot \dh}{\sqrt{|g|}}\normal\formComma 
\end{align*}
where $  \partial\para_{\ofrak} = [\IdTwo,  \dh]^{T} $,
\ie\, it also hold the relations $ \wF = \wb\cdot\partial\para_{\ofrak} $ and $ \gb^{\flat} = (\partial\para_{\ofrak})^T \cdot \partial\para_{\ofrak} $.
The local inner product is evaluated in the standard manner via contraction of proxies with differing variances,
\eg\, for tangential fields $ \wb,\etab\in\tangentS $ by
\begin{align*}
    \inner{}{\wb, \etab}
        &= \wb^{\sharp}\cdot\etab^{\flat}
         = \wb^{\flat}\cdot\etab^{\sharp}
         =  \wb^{\flat}\cdot\etab^{\flat} - \frac{(\wb^{\flat}\cdot\dh)(\etab^{\flat}\cdot\dh)}{|g|}\formComma
    &\normsq{}{\wb}
        &= \inner{}{\wb, \wb}
         = \wb^{\flat}\cdot\wb^{\flat} - \frac{(\wb^{\flat}\cdot\dh)^2}{|g|} \formPeriod
\end{align*}
We use only one musical symbol if the associated proxy is fully co- or contravariant.
If the proxy is mixed co- and contravariant, then we use one musical symbol for every column dimension,
\eg\ $ \gb^{\sharp\flat} = \gb^{\flat\sharp} = \IdTwo $.
The Gauss-Weingarten equation  reads
\begin{align*}
    (\partial^2\para_{\ofrak})^{T_{(1\,2\,3)}}
        &= \Gammab^{\flat}\cdot\gb^{\sharp}\cdot(\partial\para_{\ofrak})^{T}
          +\shop^{\flat}\otimes\normal\formComma
\end{align*}
where
\begin{align*}
    \partial^2\para_{\ofrak} 
        &= [0,0,1]^T \otimes \ddh\formComma
    &(\partial^2\para_{\ofrak})^{T_{(1\,2\,3)}}
        &= \ddh \otimes [0,0,1]^T \formComma
\end{align*}
the Christoffel symbols (of first kind) as matrix proxy is
\begin{align*}
    \Gammab^{\flat}
        &=  (\partial^2\para_{\ofrak})^{T_{(1\,2\,3)}} \cdot \partial\para_{\ofrak}
         = \ddh \otimes \dh \formComma
\end{align*}
and the covariant proxy of the second fundamental form/shape operator is
\begin{align*}
    \shop^{\flat}
        &= (\partial^2\para_{\ofrak})^{T_{(1\,2\,3)}} \cdot \normal
         = \frac{1}{\sqrt{|g|}} \ddh \formPeriod
\end{align*}
As a consequence, the mean curvature is
\begin{align}\label{eq:meanc_gaussc}
    \meanc 
        &= \shop^{\flat}\dbdot\gb^{\sharp}
         = \frac{1}{\sqrt{|g|}}\left( \partial\cdot\dh - \frac{\dh \cdot \ddh \cdot \dh}{|g|} \right)\formComma
\end{align}
where we used that the determinant $ |\bullet| $ is a multiplicative map on square matrices,
and that with $ |g|=|\gb^{\flat}| $ and $ |\gb^{\sharp}| = |\gb^{\sharp}\gb^{\flat}\gb^{\sharp}| = |g||\gb^{\sharp}|^2 $
follows $ |\gb^{\sharp}| = |g|^{-1} $.
Since $ \dh \cdot \gb^{\sharp} = |g|^{-1}\dh $ holds,
the Christoffel symbols of second kind yield in matrix proxy notation
\begin{align}\label{eq:ch_2}
    \Gammab^{\flat\flat\sharp}
        &= \Gammab^{\flat}\cdot\gb^{\sharp}
         = \frac{1}{|g|}\Gammab^{\flat}
         = \frac{1}{|g|}\ddh \otimes \dh \formPeriod
\end{align}

Based on the preceding considerations, we are now in a position to formulate differential operators based on the covariant derivative $ \nabla $
in the proxy notation described above.
Scalar fields $ f\in\tangentScal $, for instance, yield
\begin{align*}
    (\nabla f)^{\flat}
        &= \partial f\formComma
    \quad\text{\resp}\quad(\nabla f)^{\sharp} 
        =\partial f - (\partial f \cdot \dh)(\nabla h)^{\sharp}\formComma
    &&\text{where}\quad
     &(\nabla h)^{\sharp}
        &=\frac{1}{|g|}\dh\formComma\\
    (\nabla^2 f)^{\flat}
        &=\partial^2 f - (\partial f \cdot \dh)(\nabla^2 h)^{\flat} \formComma
    &&\text{where}\quad
     &(\nabla^2 h)^{\flat}
        &= \frac{1}{|g|}\ddh\formComma\\
   \Delta f
       &= \partial\cdot\partial f - \frac{\dh \cdot \partial^2 f \cdot \dh}{|g|} - (\partial f \cdot \dh) \Delta h\formComma
   &&\text{where}\quad
    &\Delta h
       &= \frac{1}{|g|}\left( \partial\cdot\dh - \frac{\dh \cdot \ddh \cdot \dh}{|g|} \right)\formComma 
\end{align*}
since $ (\nabla f)^{\sharp} = (\nabla f)^{\flat}\cdot\gb^{\sharp} $,
$ (\nabla^2 f)^{\flat} = \partial^2 f - \Gammab^{\flat\flat\sharp}\cdot\partial f $,
and $ \Delta f = \div\nabla f = (\nabla^2 f)^{\flat}\dbdot\gb^{\sharp} $ is valid.
We also observe that 
\begin{align*}
    \shop 
        &= \sqrt{|g|}(\nabla^2 h)\formComma
    &\meanc
        &= \sqrt{|g|}\Delta h
\end{align*}
hold.
For tangential vector fields $ \wb\in\tangentS $ and vector fields $ \Wb=\wb+\wnor\normal\in\tangentR $ with scalar normal component $ \wnor\in\tangentScal $,
we obtain
\begin{align*}
    (\nabla\wb)^{\flat} 
        &= \partial \wF - (\wF \cdot \dh)(\nabla^2 h)^{\flat}\formComma\\
    \div\wb
        &= \partial\cdot\wF - \frac{\dh \cdot \partial \wF \cdot \dh}{|g|} - (\wF \cdot \dh) \Delta h\formComma\\
    \DivC\Wb
        &= \partial\cdot\wF - \frac{\dh \cdot \partial \wF \cdot \dh}{|g|} 
            - (\wF \cdot \dh + \sqrt{|g|}\wnor) \Delta h\formComma
\end{align*}
since $(\nabla\wb)^{\flat} = \partial \wF - \Gammab^{\flat\flat\sharp}\cdot\wF$, 
$ \div\wb = (\nabla\wb)^{\flat} \dbdot \gb^{\sharp} $,
and $ \DivC\Wb = \div\wb - \wnor\meanc $ hold.

The above considerations pertain to a purely instantaneous framework. 
In the following, we extend the analysis to include dynamical quantities.
The observer velocity is given by
\begin{align*}
    \Vbobs
        &= \partial_t\para_{\ofrak}
         = \left[ 0, 0, \partial_t h \right]^T \in\tangentR\formPeriod
\end{align*}
The normal velocity is equal for all observer, including the material, and reads
\begin{align*}
    \vnor 
        &= \Vbobs\cdot\normal
         = \frac{\partial_t h}{\sqrt{|g|}} \in \tangentScal\formPeriod 
\end{align*}
Note that in our models the normal spatial equations are always given by $ \vnor = -\phi $, with $ \phi\in\tangentScal $, which thus translates to
\begin{align*}
    \partial_t h = -\sqrt{|g|}\phi\formPeriod
\end{align*}
The tangential observer velocity yields
\begin{align*}
    \vbobs^{\flat}
        &= \Vbobs \cdot \partial\para_{\ofrak}
         = (\partial_t h) \dh\formComma
    &\vbobs^{\sharp}
         &= \vbobs^{\flat} \cdot \gb^{\sharp}
          = \frac{\partial_t h}{|g|}\dh\formComma
     &\vbobs
         &=  \frac{\partial_t h}{|g|}
                \begin{bmatrix}
                    \dh \\ |g| - 1
                \end{bmatrix}\formPeriod
\end{align*}
While the normal material velocity is fixed by $ \vnor $, the tangential material velocity $ \vb\in\tangentS $ remains undefined by the observer.
As a consequence we could use it as a degree of freedom. 
The choice of proxy for $ \vb $ is not unique.
However, we decide to use the covariant proxy 
\begin{empheq}[box=\fbox]{align*}
    \vF &:= [v_x, v_y] \formComma
\end{empheq}
since tangential spatial equations reads $ \vb = -\psi\nabla\mu $ in this paper, where $ \psi,\mu\in\tangentScal $, 
which simply results in
\begin{align*}
    \vF = -\psi\partial\mu\formComma
\end{align*}
\ie\ $ v_x = -\psi\partial_x\mu $ and $ v_y = -\psi\partial_y\mu $.
For visualizations alone, it would be helpful to have the material velocity at the surface.
For this purpose the tangential and full material velocity in terms of $ \vF $ read
\begin{empheq}[box=\fbox]{align*}
    \vb
        &=  \begin{bmatrix}
                \vF \\ 0
            \end{bmatrix}
           +\frac{\vF \cdot \dh}{\sqrt{|g|}}\normal \in\tangentS \formComma
   &\Vb
        &= \vb +\vnor\normal
         =   \begin{bmatrix}
                 \vF \\ 0
             \end{bmatrix}
            +\frac{\partial_t h + \vF \cdot \dh}{\sqrt{|g|}}\normal \in\tangentR\formPeriod
\end{empheq}
For the material derivative of a scalar field $ \psi\in\tangentScal $ we calculate
\begin{align*}
    \dot{\psi}
        &= \partial_t \psi + \nabla_{\vb-\vbobs}\psi
         = \partial_t \psi + \left( \vF - (\partial_t h)\dh \right)\cdot(\nabla\psi)^{\sharp}
         = \partial_t \psi + \vF\cdot\partial\psi
           -\frac{\partial \psi \cdot \dh}{|g|}\left( \partial_t h + \vF\cdot\dh \right)\formPeriod
\end{align*}
One may observe that $\dot{h} = \frac{\partial_t h + \vF\cdot\dh}{|g|}$ is valid 
and thus also $\dot{\psi} = \partial_t \psi + \left(\vF-\dot{h}\dh\right)\cdot\partial\psi$.
For the componentwise trace-divergence of the material velocity we get
\begin{align*}
    \DivC\Vb
        &= \partial\cdot\vF - \frac{\dh \cdot \partial \vF \cdot \dh}{|g|} 
                    - (\partial_t h + \vF \cdot \dh) \Delta h \formPeriod
\end{align*}
Eventually, the scalar Truesdell time derivative \eqref{eq:truesdell_rate} yields
\begin{empheq}[box=\fbox]{align*}
    \mathring{\psi}
        &=  \partial_t \psi - \left( \psi\Delta h+\frac{\partial\psi\cdot\dh}{|g|} \right)\partial_t h
            +\psi\left( \partial\cdot\vF - \frac{\dh\cdot\partial\vF\cdot\dh}{|g|}\right)
            +\vF\cdot\left( \partial \psi - \left( \psi\Delta h+\frac{\partial\psi\cdot\dh}{|g|} \right)\dh \right)\formPeriod
\end{empheq}
As the tangential material velocity fulfills $ \Mspat\vb = -\psi\nabla\mu $,
\ie\ with \eqref{eq:truesdell_velosubs}, we obtain
\begin{empheq}[box=\fbox]{align*}
    \mathring{\psi}
        &=  \partial_t \psi - \left( \psi\Delta h+\frac{\partial\psi\cdot\dh}{|g|} \right)\partial_t h
            -\Mspat^{-1}\psi\left( \psi\Delta\mu 
                                        + 2\left( \partial\psi \cdot \partial\mu 
                                                - \frac{(\partial\psi\cdot\dh)(\partial\mu\cdot\dh)}{|g|} \right) \right)\formPeriod
\end{empheq}

These allow to formulate the surface tension flows in height observer formulation \eqref{eq:gradient_flow_SO_height} and \eqref{eq:gradient_flow_SO_new_height}, which are appropriate for standard numerical approaches.

\section{($L^2,\HmOne$) - Surface Gradient Flows only in Normal Direction}\label{sec:gradientflow_normal}

In analogy to \cref{sec:SGF} we here consider the situation of surface evolution only in normal direction. 
We thus stipulate the ($L^2,\HmOne$) - surface gradient flow
\begin{align}\label{eq:gradient_flow_normal}
    \Mspat \vnor
            &= \normal\DX\potenergy\formComma
   &\Mdens\mathring{\psi}\vert_{\vb=\nullb}
           &= \Delta\Dpsi\potenergy\formComma
\end{align}
where
\begin{align}\label{eq:truesdell_rate_normal}
   \mathring{\psi}\vert_{\vb=\nullb}
        &= \partial_t\psi - \left(\nabla_{\vbobs}\psi + \psi\vnor\meanc\right)
\end{align}
is the scalar Truesdell time derivative \eqref{eq:truesdell_rate} for material motions solely in normal direction,
and $ \vbobs\in\tangentS $ is an arbitrary tangential observer velocity.
\begin{proposition}\label{prop:normal_fullflow_relation}
    The surface gradient flow \eqref{eq:gradient_flow_normal} equals the surface gradient flow \eqref{eq:gradient_flow} under the constraint $ \vb=\nullb $.
\end{proposition}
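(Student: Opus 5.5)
The plan is to read ``under the constraint $\vb=\nullb$'' as restricting both the admissible velocities and the admissible virtual displacements to the normal subspace. Then I would compare the two flows component by component. The argument has three steps: first the time derivative, then the spatial equation, then a consistency check via the energy rate.

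\emph{Time derivative.} Setting $\vb=\nullb$ in the definition \eqref{eq:truesdell_rate} gives $\Vb=\vnor\normal$ and $\DivC\Vb=-\vnor\meanc$. The representation
\[
\mathring{\psi}=\partial_t\psi+\div(\psi\vb)-\nabla_{\vbobs}\psi-\psi\vnor\meanc
\]
then collapses to $\mathring{\psi}\vert_{\vb=\nullb}$ as in \eqref{eq:truesdell_rate_normal}, with $\vbobs$ still arbitrary. Since $\Dpsi\potenergy$ is unaffected by any a priori dependence on $\para$ (see \eqref{eq:partial_variation_psi}), the density equation $\Mdens\mathring{\psi}=\Delta\Dpsi\potenergy$ of \eqref{eq:gradient_flow} becomes the second equation of \eqref{eq:gradient_flow_normal} word for word.

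\emph{Spatial equation.} Under the constraint, the space of admissible $\Vb$ and $\Wb$ is $\{\wnor\normal:\wnor\in\tangentScal\}$. I would formulate the constrained $L^2$-gradient flow as the requirement
\[
\innerH{\tangentR}{\Mspat\Vb+\DX\potenergy,\wnor\normal}=0 \quad\text{for all } \wnor\in\tangentScal .
\]
Using the orthogonal decomposition $\DX\potenergy=\IdS\DX\potenergy+(\normal\DX\potenergy)\normal$, the tangential part drops out against $\wnor\normal$. What remains is the pointwise scalar equation relating $\Mspat\vnor$ and $\normal\DX\potenergy$.

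Equivalently, one can write \eqref{eq:gradient_flow} with a tangential Lagrange multiplier $\lambdab\in\tangentS$ enforcing $\vb=\nullb$, i.e.\ $\Mspat\Vb=-\DX\potenergy+\lambdab$. Here $\lambdab=\IdS\DX\potenergy$ absorbs exactly the tangential component, and the normal component is untouched. The Truesdell gauge \eqref{eq:truesdell_gauge} restricted to normal displacements reads $\eth_{\wnor\normal}\psi=\psi\wnor\meanc$. Hence the normal component $\normal\DX\potenergy$ is the same whether it is obtained from the full variation or from a variation solely in normal direction. This is the point to verify carefully, since it makes the two descriptions of the constrained flow agree.

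\emph{Main obstacle: sign convention and consistency.} The main subtlety is the sign and orientation convention in the first equation of \eqref{eq:gradient_flow_normal}. The normal projection of $\Mspat\Vb=-\DX\potenergy$ gives $\Mspat\vnor=-\normal\DX\potenergy$. This is consistent with \eqref{eq:S_gradientflow}, where $\Mspat\vnor=(f-\psi f')\meanc$ while $\normal\DX\energyHT=(\psi f'-f)\meanc$. I would fix the convention by checking against that example, so that the proposition is understood with $\Mspat\vnor=-\normal\DX\potenergy$. Finally, as a consistency check, I would rerun the computation \eqref{eq:ddtU_1} with $\Vb=\vnor\normal$. It gives
\[
\ddt\potenergy=-\Mspat^{-1}\normHsq{\tangentScal}{\normal\DX\potenergy}-\Mdens^{-1}\normHsq{\tangentS}{\nabla\Dpsi\potenergy}\le 0 ,
\]
which confirms that the constrained flow is the dissipative normal gradient flow.
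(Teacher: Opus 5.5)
Your proposal is correct and follows essentially the paper's argument: substitute $\vb=\nullb$ into the Truesdell rate and the density equation, then use the tangential Lagrange multiplier $\lambdab=\IdS\DX\potenergy$ (equivalently, your restriction to normal test displacements) to show that the constraint only absorbs the tangential equation. Your sign check is also right: projecting $\Mspat\Vb=-\DX\potenergy$ gives $\Mspat\vnor=-\normal\DX\potenergy$, which is what \cref{ex:surface_tension_flow_normal} and the dissipation in \cref{prop:props_of_gradient_flow_normal} require, so the missing minus sign in \eqref{eq:gradient_flow_normal} is a typo in the paper.
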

\begin{proof}
    The gradient flow \eqref{eq:gradient_flow_normal} is trivially obtained by substituting $\vb = \nullb$ into \eqref{eq:gradient_flow} and omitting the tangential equation. 
    However, it remains to be shown that this constraint does not induce any relevant constraint forces.
    To enforce the constraint in the gradient flow \eqref{eq:gradient_flow} in a systematic manner,
    we introduce a tangential Lagrange multiplier $\lambdab\in\tangentS$ into the spatial equation, yielding
    \begin{align*}
        \Mspat\Vb 
            &= -\DX\potenergy + \lambdab\formPeriod
    \end{align*}
    Since the constraint is independent of $\psi$, it appears only in this equation.
    Accordingly, the only constraint force is given by $\lambdab = \IdS\DX\potenergy$. 
    Since this force acts purely in the tangential direction and, under $\vb = \nullb$, completely determines the tangential equation, it is justified to omit the tangential equation, and thus the associated constraint forces.
\end{proof}
\begin{remark}
    The normal part $ \normal\DX\potenergy $ of the spatial functional derivative in \eqref{eq:gradient_flow_normal} equals the normal functional derivative under the normal Truesdell gauge of surface independence 
    \begin{align}\label{eq:truesdell_gauge_normal}
        \forall\wnor\in\tangentScal:\qquad
        \eth_{\wnor}\psi
            &:=\eth_{\wnor\normal}\psi
             = \psi\wnor\meanc\formPeriod
    \end{align}
    Consequently, the spatial equation of the gradient flow \eqref{eq:gradient_flow_normal} also follows from a purely normal (shape) variation.
\end{remark}
\begin{proof}
    Following \cite{NitschkeReutherEtAl_PRSA_2020}, the fundamental relation for local normal variations is given by
    $ \eth_{\wnor}\para = \wnor\normal $.
    A comparison with our full local variation $ \eth_{\Wb}\para = \Wb $ yields $ \wb = \nullb $.
    Moreover, it holds $ \eth_{\Wb}\psi\vert_{\wb=\nullb} = \psi\wnor\meanc = \eth_{\wnor}\psi $ for the Truesdell gauge of surface independence.
    Eventually, 
    \begin{align*}
        \innerH{\tangentScal}{\deltafrac{\potenergy}{\xi},\wnor}
            &=  \innerH{\tangentR}{\deltafrac{\potenergy}{\para}, \wnor\normal}
             = \innerH{\tangentScal}{\normal\DX\potenergy,\wnor}\formComma
    \end{align*}
    where $ \xi $ is the deformation coordinate in normal direction, gives the assumption.
\end{proof}
\begin{proposition}\label{prop:props_of_gradient_flow_normal}
    The $ (L^2, \HmOne) $- surface gradient flow \eqref{eq:gradient_flow_normal}, \wrt\ the normal Truesdell gauge of surface independence \eqref{eq:truesdell_gauge_normal}, 
    ensures the conservation property 
    \begin{align*}
        \ddt \int_{\surf}\psi\dS
             &= 0 \formComma
    \end{align*}
    and a proper energy dissipation
    \begin{align*}
        \ddt\potenergy
            &= -\left( \Mspat^{-1} \normHsq{\tangentScal}{\normal\DX\potenergy} 
                      +\Mdens^{-1} \normHsq{\tangentS}{\nabla\Dpsi\potenergy}\right) 
             \le 0\formPeriod
    \end{align*}
\end{proposition}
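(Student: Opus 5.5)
The plan is to mirror the proof of \cref{prop:props_of_gradient_flow}, specialised to a material velocity that is purely normal, $\Vb = \vnor\normal$ (\ie\ $\vb=\nullb$).

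\emph{Conservation.} With $\vb=\nullb$, the truncated rate \eqref{eq:truesdell_rate_normal} coincides with the full scalar Truesdell time derivative \eqref{eq:truesdell_rate}. Indeed, \eqref{eq:truesdell_rate} gives $\partial_t\psi + \div(\psi\vb) - \nabla_{\vbobs}\psi - \psi\vnor\meanc$, and the divergence term drops. Hence the density equation of \eqref{eq:gradient_flow_normal} reads $\mathring{\psi} = -\div\qb$ with flux $\qb = -\Mdens^{-1}\nabla\Dpsi\potenergy$. For periodic or otherwise suitable boundary conditions the boundary integral of $\inner{}{\qb,\nb}$ vanishes, so \cref{prop:truesdell_conservation_prop} yields $\ddt\int_\surf\psi\dS = 0$.

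\emph{Energy dissipation.} I would first apply the chain rule along the evolution, with spatial variation in direction $\Vb=\vnor\normal$. This gives
\begin{align*}
\ddt\potenergy = \innerH{\tangentR}{\frac{\partial\potenergy}{\partial\para},\vnor\normal} + \innerH{\tangentScal}{\frac{\partial\potenergy}{\partial\psi},\dot\psi}.
\end{align*}
Next I would use \eqref{eq:partial_variation_X} with $\Wb=\vnor\normal$, where $\DivC(\vnor\normal) = -\vnor\meanc$. Equivalently, by the preceding remark, this is the normal Truesdell gauge $\eth_{\vnor}\psi=\psi\vnor\meanc$. Together with \eqref{eq:partial_variation_psi}, the first term becomes
\begin{align*}
\innerH{\tangentScal}{\normal\DX\potenergy,\vnor} - \innerH{\tangentScal}{\Dpsi\potenergy,\psi\vnor\meanc}.
\end{align*}
The second term is $\innerH{\tangentScal}{\Dpsi\potenergy,\dot\psi}$. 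Since $\dot\psi - \psi\vnor\meanc = \mathring{\psi}\vert_{\vb=\nullb}$, the two $\Dpsi\potenergy$ terms merge, giving
\begin{align*}
\ddt\potenergy = \innerH{\tangentScal}{\normal\DX\potenergy,\vnor} + \innerH{\tangentScal}{\Dpsi\potenergy,\mathring{\psi}\vert_{\vb=\nullb}}.
\end{align*}

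Finally I would insert \eqref{eq:gradient_flow_normal}. The first term gives $-\Mspat^{-1}\normHsq{\tangentScal}{\normal\DX\potenergy}$, reading the normal equation with the sign convention matching \eqref{eq:gradient_flow}, \ie\ $\Mspat\vnor = -\normal\DX\potenergy$. The second term becomes $\Mdens^{-1}\innerH{\tangentScal}{\Dpsi\potenergy,\Delta\Dpsi\potenergy}$. Integrating by parts with the boundary conditions turns this into $-\Mdens^{-1}\normHsq{\tangentS}{\nabla\Dpsi\potenergy}$, which completes the energy rate.

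\emph{Main obstacle.} The calculations are routine; the delicate point is consistency of the gauge and the signs. One must check that the normal part of \eqref{eq:partial_variation_X} produces exactly the term $\psi\vnor\meanc$, with the correct sign, so that it combines with $\dot\psi$ into \eqref{eq:truesdell_rate_normal}. One must also check that the sign in the normal spatial equation is the dissipative one.

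A shortcut would be to invoke \cref{prop:normal_fullflow_relation} and reuse the computation in the proof of \cref{prop:props_of_gradient_flow} with $\vb=\nullb$. The only change is that $\normHsq{\tangentR}{\DX\potenergy}$ is replaced by its normal part, since the tangential constraint force does no work when $\vb=\nullb$.
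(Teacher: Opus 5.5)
Your proposal is correct and follows the paper's argument. The paper invokes \cref{prop:normal_fullflow_relation} together with the already-derived rate \eqref{eq:ddtU_1} to obtain $\ddt\potenergy = \innerH{\tangentScal}{\normal\DX\potenergy,\vnor} + \innerH{\tangentScal}{\Dpsi\potenergy,\mathring{\psi}}$ and then substitutes the flow; this is exactly your closing ``shortcut'', and your main route just re-derives \eqref{eq:ddtU_1} for $\Vb=\vnor\normal$. You are also right to read the normal equation as $\Mspat\vnor = -\normal\DX\potenergy$ rather than the sign printed in \eqref{eq:gradient_flow_normal}: this is the sign consistent with \eqref{eq:gradient_flow} and \cref{ex:surface_tension_flow_normal}, and it is what makes the first term dissipative.
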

\begin{proof}
    \cref{prop:normal_fullflow_relation} and \eqref{eq:ddtU_1} yields
    \begin{align*}
        \ddt\potenergy
            &= \innerH{\tangentR}{\DX\potenergy, \Vb}
                    + \innerH{\tangentScal}{\Dpsi\potenergy, \mathring{\psi}}
             = \innerH{\tangentR}{\normal\DX\potenergy, \vnor}
                     + \innerH{\tangentScal}{\Dpsi\potenergy, \mathring{\psi}}\formPeriod
    \end{align*}
    Substituting the gradient flow  \eqref{eq:gradient_flow_normal} into this expression completes the proof.
\end{proof}
\begin{example}\label{ex:surface_tension_flow_normal}
    The surface energy $ \energyHT = \int_{\surf} f(\psi) \dS $ \eqref{eq:energyHT} yields the normal surface tension flow
    \begin{align*}
        \Mspat \vnor
            &= \left( f(\psi) - \psi f'(\psi)\right)\meanc\formComma\\
        \Mdens\left( \partial_t\psi - \left( \nabla_{\vbobs}\psi + \vnor\psi\meanc \right) \right)
                &= \div(f''(\psi)\nabla\psi)
                 =f''(\psi)\Delta\psi + f'''(\psi)\normsq{}{\nabla\psi}\formComma
    \end{align*}
    due to \cref{prop:normal_fullflow_relation}, \eqref{eq:S_gradientflow}, and \eqref{eq:truesdell_rate_normal}.
    Equivalently, using the surface tension $\sigma(\psi) =   f(\psi) - \psi f'(\psi)$ \eqref{eq:surface_tension}, this results in
    \begin{align*}
        \Mspat \vnor
             &= \sigma(\psi)\meanc\formComma\\
        \Mdens\left( \partial_t\psi - \left( \nabla_{\vbobs}\psi + \vnor\psi\meanc \right) \right)
            &= -\div\left(\frac{\sigma'(\psi)}{\psi}\nabla\psi\right)
             = -\frac{\sigma'(\psi)}{\psi}\Delta\psi
                +\frac{\sigma'(\psi)-\psi\sigma''(\psi)}{\psi^2}\normsq{}{\nabla\psi}\formPeriod
    \end{align*}
    The height-observer formulation the corresponding gradient flow to \eqref{eq:gradient_flow_SO_height} becomes
    \begin{align}\label{eq:gradient_flow_SO_height_normal}
    \begin{aligned} 
            \Mspat \partial_t h
                &= |g|\left(  f(\psi) - \psi f'(\psi)\right)\hfrak\formComma\\
            \Mdens\left(\partial_t \psi - \left( \psi\hfrak+\frac{\partial\psi\cdot\dh}{|g|} \right)\partial_t h\right)\hspace{-5em}\\
                &=  f''(\psi)
                      \left(\partial\cdot\partial\psi - (\partial\psi \cdot\dh) \hfrak - \frac{\dh\cdot\partial^2\psi\cdot\dh}{|g|} \right)
                    +  f'''(\psi)\left( \partial\psi\cdot\partial\psi - \frac{(\partial\psi\cdot\dh)^2}{|g|} \right)\formComma
        \end{aligned}
    \end{align}
    where $ \hfrak = \frac{\partial\cdot\dh}{|g|} - \frac{\dh\cdot\ddh\cdot\dh}{|g|^2} $, and the normal velocity is given by $ \vnor = \frac{\partial_t h}{\sqrt{|g|}} $.
\end{example}

\noindent
{\bfseries Acknowledgements}: A.V. acknowledges support by the German Research Foundation (DFG) through FOR3013 "Vector- and tensor-values surface PDEs".

%\clearpage
\printbibliography

\end{document}